\documentclass[journal]{IEEEtran}

\usepackage{url}
\usepackage{amsmath,amssymb,amsfonts}
\usepackage{amsthm}
\usepackage{mathtools}
\usepackage{tabularx,booktabs, multirow}
\usepackage{graphicx}
\usepackage{bm}
\usepackage{algpseudocode}
\usepackage{algorithm}
\usepackage{physics}
\makeatletter
\def\endfigure{\end@float}
\def\endtable{\end@float}
\makeatother
\usepackage[caption=false]{subfig} 
\usepackage{threeparttable}

\renewenvironment{proof}{{\bfseries Proof.}}{\qedsymbol}
\renewcommand\qedsymbol{$\blacksquare$}
\theoremstyle{definition}

\newtheorem{definitionx}{Definition}
\newtheorem{theoremx}{Theorem}
\newtheorem{lemmax}{Lemma}
\newtheorem{corollaryx}{Corollary}

\newtheorem{remarkx}{Remark}
\newtheorem{propertyx}{Property}

\newcommand{\tp}{^\top}

\DeclareDocumentCommand\diag{}{\opbraces{\operatorname{diag}}}

\DeclareDocumentCommand\rank{}{\opbraces{\operatorname{rank}}}
\DeclareDocumentCommand\tr{}{\opbraces{\operatorname{tr}}}
\DeclareDocumentCommand\vec{}{\opbraces{\operatorname{vec}}}

\DeclareDocumentCommand\edm{}{\opbraces{\operatorname{edm}}}
\begin{document}
\title{Multicluster Design and Control \\ of Large-Scale Affine Formations}

\author{Zhonggang Li, \IEEEmembership{Graduate Student Member, IEEE}, Geert Leus, \IEEEmembership{Fellow, IEEE}, \\ and Raj Thilak Rajan, \IEEEmembership{Senior Member, IEEE}
\thanks{This work was partially presented at the 33rd European Signal Processing Conference (EUSIPCO) in 2025 \cite{li2025fast}, and is partially funded by the Sensor AI Lab, under the AI Labs program of Delft University of Technology and by the European Commission Key Digital Technologies Joint Undertaking - Research and Innovation (HORIZON-KDT-JU-2023-2-RIA), under grant agreement No 101139996, the ShapeFuture project. The authors are with the Signal Processing Systems group, the Faculty of EEMCS, Delft University of Technology, 2628CD, Delft, The Netherlands (email: \{z.li-22, g.j.t.leus, r.t.rajan\}@tudelft.nl).}
}

\maketitle

\begin{abstract}

Conventional affine formation control (AFC) empowers a network of agents with flexible but collective motions - a potential which has not yet been exploited for large-scale swarms. One of the key bottlenecks lies in the design of an interaction graph, characterized by the Laplacian-like stress matrix. Efficient and scalable design solutions often yield suboptimal solutions on various performance metrics, e.g., convergence speed and communication cost, to name a few. The current state-of-the-art algorithms for finding optimal solutions are computationally expensive and therefore not scalable. In this work, we propose a more efficient optimal design for any generic configuration, with the potential to further reduce complexity for a large class of nongeneric rotationally symmetric configurations. Furthermore, we introduce a multicluster control framework that offers an additional scalability improvement, enabling not only collective affine motions as in conventional AFC but also partially independent motions naturally desired for large-scale swarms.
The overall design is compatible with a swarm size of several hundred agents with fast formation convergence, as compared to up to only a few dozen agents by existing methods. Experimentally, we benchmark the performance of our algorithm compared with several state-of-the-art solutions and demonstrate the capabilities of our proposed control strategies.
\end{abstract}

\begin{IEEEkeywords}
distributed control and optimization, communication networks, networked control systems
\end{IEEEkeywords}

\section{Introduction}\label{sec:intro}

\IEEEPARstart{I}{n} recent years, robotic swarms have attracted significant attention for their ability to execute complex, large-scale missions \cite{zhu2024self, sun2023mean, na2023federated}. These swarms are deployed across a wide variety of domains, from environmental exploration \cite{bartolomei2023fast} and collective payload transport \cite{jurt2022collective}, to synchronized satellite constellations \cite{Zhao2023Satellite}, some of which rely on tightly coordinated behaviors among the individual robots. For these specific missions, where maintaining a spatial relationship during collective motions is essential, formation control serves as an essential mechanism \cite{oh2015survey, li2026geometry, liu2018survey}. A distributed formation control system uses relative information such as interagent distances \cite{babazadeh2022directed}, displacements \cite{fang2023distributed}, bearings \cite{cheng2024general}, etc., to achieve and maintain a desired geometric pattern in 2D or 3D space, and possibly engage in continuous formation maneuvers. Such a system is typically characterized by a framework consisting of two main components: (a) a graph, where the edges denote the communication links for information exchange, and (b) a configuration that instantiates the graph nodes with spatial coordinates. The design of such a framework, similar to domains such as sensor networks \cite{zheng2009wireless}, distributed optimization \cite{heusdens2024distributed}, etc., plays a critical role in determining the overall stability, performance, and communication cost of the system.

Recently, affine formation control (AFC) has garnered significant attention for its maneuverability and coordination flexibility \cite{lin2015necessary, zhao2018affine, yang2024joint}. Unlike traditional rigid formation control, AFC permits continuous geometric transformations—such as scaling, shearing, and rotation—making it advantageous for adaptive mission planning in dynamic environments. Fundamentally, AFC relies on a consensus-based framework where control inputs are derived from the relative displacements among neighboring agents. While this distributed architecture naturally promotes scalability, existing literature rarely addresses the practical challenges of deploying AFC in large-scale swarms. The core of these challenges is the network topology: instead of a standard graph Laplacian, AFC employs a stress matrix \cite{connelly2022frameworks}, a generalized Laplacian with real-valued edge weights intrinsically tied to graph rigidity. Designing a stress matrix is the foundational step for AFC implementation. While the legitimacy of stabilization is a hard constraint, several additional critical aspects should also be considered for large-scale swarms, such as the sparsity of the underlying graph determining the communication overhead, the convergence speed of AFC given an arbitrary initialization, and the computational complexity for practical feasibility.

Designing a stabilizing stress matrix for fast AFC convergence while maintaining a sparse communication topology is challenging, as the communication capacity affects the efficiency of information exchange and thus influences the convergence \cite{kim2005maximizing, olfati2007consensus}. Earlier mathematical literature study rigid geometric designs that guarantee the existence of a stress matrix \cite{kelly2014class, connelly2022frameworks}. Subsequent works formulate numerical feasibility problems for computing such matrices~\cite{lin2015necessary, zhao2018affine}. The focus is on finding a feasible stabilizing stress matrix with a prescribed graph topology, rather than optimizing any objective. This prescription can be impractical for large swarms in 2D and 3D. Similar approaches are adopted in \cite{yang2018constructing} and \cite{li2025distributed}, in which a sparsity pattern of the underlying graph is predetermined. Specifically, \cite{yang2018constructing} adopts a sparse nullspace reconstruction approach that sets specific locations in the stress matrix to zero, and calculates the rest to satisfy the constraints. Expanding graphs is also common in constructing rigid frameworks with stress matrices \cite{li2025distributed, yang2018growing}, where an initial small base graph is considered, and the other nodes are added sequentially by strategically connecting to a few nodes in the existing graph. Both methods yield substantial sparsification compared with a complete graph, resulting in low communication overhead, and admit analytical solutions that are computationally efficient. However, they often rely on the genericity assumption of the given node configuration, and the sparsity pattern or the connecting strategy is heuristic. The critical aspect of convergence speed is not considered either, especially for large networks. A systematic search for an optimal solution with respect to both convergence and communication cost is developed in \cite{xiao2022framework}, in which the existence of edges is modeled by binary variables and the problem is formulated as a mixed-integer semidefinite program (MISDP). Although this framework admits optimized solutions, it raises concerns about the computational complexity and optimality due to the nonconvex formulation. As such, this motivates a stress design method that jointly optimizes the critical objectives while maintaining manageable computational complexity for large networks. Additionally, the absence of systematic comparisons in the literature calls for a quantitative evaluation of existing methods.

In this paper, we present an optimal design of stress matrices through an efficient convex optimization framework for any given configuration. We focus on network sparsification for efficient communication while guaranteeing AFC convergence acceleration. Compared with our preliminary results in \cite{li2025fast}, our additional contributions include the following. 
\begin{itemize}
    \item We reveal the optimal structure of the stress matrix for some special nongeneric geometries, i.e., rotationally symmetric configurations, which further facilitates a reduction in the computational complexity.
    \item To ensure the stress design is compatible with larger-scale networks, we propose a control framework that partitions the swarm into smaller clusters and aligns their motions through bridging nodes. 
    \item We present insights on the stability and provide conditions when the clusters exhibit collective behaviors, as well as when inter-cluster flexibility is introduced.
    \item We perform a comprehensive performance benchmark of our proposed stress design solution against several state-of-the-art solutions across generic and nongeneric test cases ranging from small-scale ($\leq10$ nodes) to large-scale ($\geq100$ nodes) setups in 2D and 3D.
\end{itemize}

The rest of the paper is organized as follows. In Section~\ref{sec: preliminaries}, we introduce the preliminaries of graph rigidity and stress-based AFC. In Section~\ref{sec: methodology}, we propose a convex optimization framework for the stress design and discuss the choice of hyperparameters to balance the objectives. This is considered a generic solution compatible with any given configuration. For symmetric configurations, we discuss their optimal structure under our proposed formulation, and propose the complexity-reduced unique stress identifier (USI) method in Section \ref{sec: usi}. In Section~\ref{sec: mc control}, we introduce the multicluster control framework based on a classical control law, discuss its stability, and provide conditions under which the clusters exhibit fully or partially collective behaviors. Finally, we compare our proposed solutions with the state-of-the-art through a comprehensive benchmark and demonstrate the capability of multicluster control in Section \ref{sec: simulations}. Concluding remarks and future work are briefly outlined in Section \ref{sec: conclusions}.

\textit{Notations.} Vectors and matrices are represented by lowercase and uppercase boldface letters, respectively, such as $\bm{a}$ and $\bm{A}$. Sets and graphs are represented using calligraphic letters, e.g., $\mathcal{A}$. Vectors of length $N$ of all ones and zeros are denoted by $\bm{1}_N$ and $\bm{0}_N$, respectively, and their matrix versions are similarly $\bm{1}_{M\times N}$ and $\bm{0}_{M\times N}$. An identity matrix of size $N$ is denoted by $\bm{I}_N$. The Kronecker product is $\otimes$ and the vectorization operator is $\vec(\cdot)$. The $\diag(\cdot)$ operator creates a diagonal matrix from a vector. Moreover, $\tr(\cdot)$ denotes the trace operator and $\lambda_k(\bm{A})$ denotes the $k$-th smallest eigenvalue of a symmetric matrix $\bm{A}$. We use $\qty[\bm{A}]_{ij}$ and $\qty[\bm{a}]_i$ to denote the elements of matrices and vectors, respectively. Units are enclosed by square brackets, e.g., seconds [s].

\section{Fundamentals}\label{sec: preliminaries}

\subsection{Graphs and Rigidity Theory}
Consider $N$ mobile agents in $D$-dimensional Euclidean space where $N\geq D+1$. They are interconnected with a communication network, modeled by an undirected graph $\mathcal{G} = (\mathcal{V}, \mathcal{E})$, where the vertices $\mathcal{V} = \qty{1,...,N}$ denote the agents, and the edges $\mathcal{E}\subseteq\mathcal{V}\times \mathcal{V}$ denote the communication links. We use $N = \qty|\mathcal{V}|$ and $M = \qty|\mathcal{E}|$ as shorthand for the number of vertices and edges, respectively. The set of neighbors of a node $i$ is defined as $\mathcal{N}_i = \{j\in\mathcal{V}: (i,j)\in\mathcal{E}\}$. Each node $i\in\mathcal{V}$ is assigned a position $\bm{p}_i\in\mathbb{R}^D$, and the \textit{configuration} of the nodes in $\mathcal{V}$ is $\bm{P}(\mathcal{V}) = [\bm{p}_1, \dots, \bm{p}_N] \in \mathbb{R}^{D \times N}$. We also define an augmented configuration $\bar{\bm{P}}(\mathcal{V}) = \qty[\bm{P}\tp, \bm{1}_N]\tp\in\mathbb{R}^{(D+1)\times N}$. They will be shorthanded to $\bm{P}$ and $\bar{\bm{P}}$ respectively, unless intended for particular sets of vertices. A \textit{generic configuration} \cite{connelly2005generic} has algebraically independent node coordinates, i.e., there are no geometric constraints among nodes. A configuration of random node coordinates has probability one of being generic. Typical nongeneric configurations include grids, a regular polygon, etc. 

A \textit{framework} $\mathcal{F} = (\mathcal{G}, \bm{P})$ pairs the graph $\mathcal{G}$ with its configuration $\bm{P}$. Two frameworks $(\mathcal{G}, \bm{P})$ and $(\mathcal{G}, \bm{P}')$ are \textit{equivalent} if $\norm{\bm{p}_i - \bm{p}_j}_2 = \norm{\bm{p}'_i - \bm{p}'_j}_2$ for all $(i,j) \in \mathcal{E}$, and \textit{congruent} if this distance equality holds for all node pairs $(i,j) \in \mathcal{V} \times \mathcal{V}$.  A framework $\mathcal{F}$ is \textit{globally rigid} (often termed \textit{rigid}) if every equivalent framework is also congruent. Furthermore, $\mathcal{F}$ is \textit{universally rigid} if it remains globally rigid in any higher-dimensional embedding $\mathbb{R}^{D'}$ where $D' \geq D$. Fig. \ref{fig: urf examples} illustrates these rigidity concepts. Universal rigidity can be algebraically represented by an \textit{equilibrium stress} $\omega_{ij}\in\mathbb{R}$ for every edge $(i,j) \in \mathcal{E}$, which together satisfy
\begin{equation}\label{equ: stress def}
    \sum_{(i,j)\in\mathcal{E}}\omega_{ij}(\bm{p}_i-\bm{p}_j) = \bm{0}_D.
\end{equation}
A more compact form of (\ref{equ: stress def}) is $\bm{\Omega}\bm{P}\tp=\bm{0}_{N\times D}$, where $\bm{\Omega}\in\mathbb{R}^{N\times N}$ is the \textit{stress matrix} defined as
\begin{equation}\label{equ: def stress matrix}
    [\bm{\Omega}]_{ij}=\begin{cases}
  0,& \text{if } (i,j)\notin\mathcal{E}\\
  -\omega_{ij},& \text{if } i\neq j, (i,j)\in\mathcal{E}\\
  \sum_{k\in\mathcal{N}_i}\omega_{ik}&  \text{if } i=j
\end{cases}.
\end{equation}
Alternatively, the equilibrium stress and the stress matrix are linked in matrix-vector form through the graph incidence matrix $\bm{B}\in\mathbb{R}^{N\times M}$
\begin{equation}\label{equ: stress inc}
    \bm{\Omega} = \bm{B}\diag(\bm{\omega})\bm{B}\tp,
\end{equation}
where $\bm{\omega}\in\mathbb{R}^M$ is a vector containing all the equilibrium stresses. Note that $\bm{\Omega}$ reduces to a standard graph Laplacian if $\diag(\bm{\omega})=\bm{I}_M$, i.e., equal weights for the edges, and it also has $\bm{1}_N$ in the nullspace like the Laplacian. The following theorem establishes the important properties of $\bm{\Omega}$ related to universal rigidity.
\begin{theoremx}\label{thm: rigidity and stress}
(\textit{Universally Rigid Frameworks and Stress Matrices}) Given a framework $\mathcal{F} = (\mathcal{G},\bm{P})$ with $\bm{P}$ being a generic configuration, $\mathcal{F}$ is universally rigid if and only if there exists a stabilizable stress matrix $\bm{\Omega}$, i.e., a positive semidefinite (PSD) stress matrix with rank $N-D-1$.
\end{theoremx}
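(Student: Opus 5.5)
This is the Connelly / Gortler--Thurston characterization of universal rigidity for generic frameworks. I would prove the two directions separately, assuming throughout that $\bar{\bm{P}}$ has full row rank $D+1$, i.e., the points affinely span $\mathbb{R}^D$; this holds for a generic configuration with $N\geq D+1$.

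\emph{Sufficiency.} Suppose $\bm{\Omega}\succeq 0$ with $\rank(\bm{\Omega})=N-D-1$. Since $\bm{\Omega}\bm{P}\tp=\bm{0}$ and $\bm{\Omega}\bm{1}_N=\bm{0}$, the row space of $\bar{\bm{P}}$ lies in $\ker\bm{\Omega}$. Both have dimension $D+1$, so $\ker\bm{\Omega}=\operatorname{range}(\bar{\bm{P}}\tp)$. Now take any equivalent framework $\bm{Q}\in\mathbb{R}^{D'\times N}$, and write $\bm{\Omega}=\sum_{(i,j)\in\mathcal{E}}\omega_{ij}(\bm{e}_i-\bm{e}_j)(\bm{e}_i-\bm{e}_j)\tp$. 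The energy $E(\bm{X})=\tr(\bm{X}\bm{\Omega}\bm{X}\tp)=\sum_{(i,j)\in\mathcal{E}}\omega_{ij}\norm{\bm{x}_i-\bm{x}_j}^2$ depends only on the edge lengths, so $E(\bm{Q})=E(\bm{P})=0$. Because $\bm{\Omega}\succeq0$, this forces $\bm{\Omega}\bm{Q}\tp=\bm{0}$, so every row of $\bm{Q}$ lies in $\operatorname{range}(\bar{\bm{P}}\tp)$. Hence $\bm{q}_i=\bm{A}\bm{p}_i+\bm{b}$ for some affine map. Equal edge lengths then give $(\bm{p}_i-\bm{p}_j)\tp(\bm{A}\tp\bm{A}-\bm{I})(\bm{p}_i-\bm{p}_j)=0$ for every edge. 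I would then show that the edge directions do not lie on a conic at infinity, i.e., the symmetric matrix $\bm{A}\tp\bm{A}-\bm{I}$ must vanish. Under genericity this is Connelly's lemma: a stress of rank $N-D-1$ forces every vertex to have degree at least $D+1$, and generic edge directions then cannot satisfy a nontrivial quadratic form. It follows that $\bm{A}\tp\bm{A}=\bm{I}$, so $\bm{Q}$ is congruent to $\bm{P}$. This part is standard, and I would cite \cite{connelly2005generic} for the conic step.

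\emph{Necessity.} This is the main obstacle, and I would follow Gortler--Thurston. Universal rigidity means $\bm{X}=\bar{\bm{P}}$-type Gram realizations are the unique solutions of an SDP: among PSD Gram matrices $\bm{G}$ satisfying $\norm{\bm{g}_i-\bm{g}_j}^2=d_{ij}^2$ on the edges (with centering), the only feasible point is $\bm{G}_0=\bm{P}\tp\bm{P}$. The next step is to apply conic duality (facial reduction) to this SDP. It yields a dual certificate $\bm{\Omega}\succeq0$ supported on $\mathcal{E}$ with $\bm{\Omega}\bm{G}_0=\bm{0}$, which is an equilibrium stress. This stress is not necessarily of maximal rank, because strict complementarity can fail. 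The generic hypothesis is exactly what rules this out. Using Gortler--Thurston's argument, for generic $\bm{P}$ the first facial-reduction step already reaches the face of $\bm{G}_0$. Equivalently, generic points are not on a conic at infinity and the stress space is locally constant, so there is a PSD stress with kernel exactly $\operatorname{range}(\bar{\bm{P}}\tp)$, i.e., of rank $N-D-1$. I expect this genericity and facial-reduction step to be the delicate part, and I would invoke their result rather than reprove the algebraic-geometry lemma.
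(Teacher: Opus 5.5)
Your proposal is correct and takes essentially the same route as the paper, which simply cites Gortler--Thurston and related works. Sufficiency is Connelly's theorem, proved exactly by your energy argument plus the conic-at-infinity lemma. Necessity is Gortler--Thurston's genericity-of-strict-complementarity (exposed-face) result, which you invoke; conics at infinity play no role in that direction, so your closing ``Equivalently'' sentence should be dropped.

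Do make explicit the hypothesis $N\geq D+2$, which both your degree lemma and the statement itself need. For $N=D+1$ the zero matrix is a PSD stress of rank $N-D-1=0$ on every graph, yet, e.g., a path on three generic points in $\mathbb{R}^2$ is not even globally rigid.
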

\noindent\begin{proof}
    See \cite{gortler2014characterizing}, \cite{zhao2018affine}, \cite{xiao2022framework}.
\end{proof}
\noindent 

\noindent Note that a generic configuration is typically assumed by the literature for mathematical guarantees, but nongeneric geometries are also useful for formation control applications. It is worth mentioning that our proposed stress design approaches work for both generic and nongeneric configurations.

\begin{figure}[t]
	\centering	

        \subfloat[\scriptsize flexible]{\raisebox{0ex}
		{\includegraphics[width=0.13\textwidth]{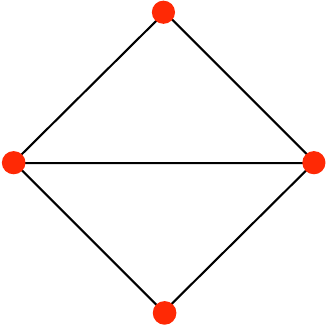}}%
	}
        \hspace{3mm}
	\subfloat[\scriptsize globally rigid]{\raisebox{0ex}
		{\includegraphics[width=0.13\textwidth]{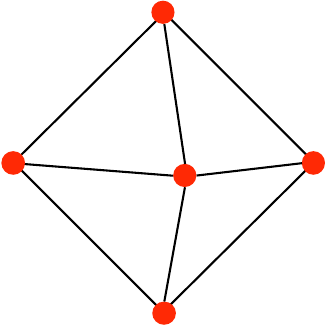}}%
	}
    \hspace{3mm}
        \subfloat[\scriptsize universally rigid]{\raisebox{0ex}
    		{\includegraphics[width=0.13\textwidth]{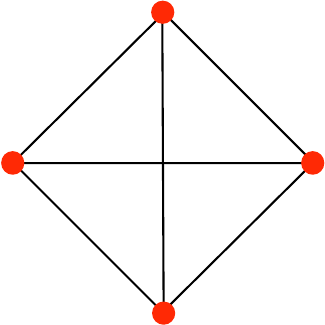}}%
    	}

	\caption{\small Examples of frameworks in $\mathbb{R}^2$ with increasing rigidity \cite{lin2015necessary}}
	\label{fig: urf examples}
		
\end{figure}

\subsection{Single-Cluster Affine Formation Control}
Given a stress matrix $\bm{\Omega}$ under a universally rigid framework, affine formation control (AFC) tries to steer the real-time position $\bm{z}_i\in\mathbb{R}^D$ into a target position $\bm{p}_i$ for agent $i\in\mathcal{V}$. To achieve this, we assume agents adopt single-integrator dynamics $\dot{\bm{z}}_i = \bm{u}_i$ where $\bm{u}_i$ is a velocity control input. The conventional consensus-based control law is designed as \cite{lin2015necessary}
\begin{equation}\label{equ: conventional local law}
    \bm{u}_i = -\sum_{j\in\mathcal{N}_i}[\bm{\Omega}]
    _{ij}(\bm{z}_j-\bm{z}_i),
\end{equation}
which has been extended and enriched for several scenarios \cite{zhao2018affine,xu2018affine}. The collective dynamics of (\ref{equ: conventional local law}) can be described by the linear time-invariant (LTI) system
\begin{equation}\label{equ: global dynamics}
    \dot{\bm{z}} = 
-\qty(\bm{\Omega}\otimes\bm{I}_D)\bm{z},
\end{equation}
where $\bm{z} = \qty[\bm{z}_1\tp,...,\bm{z}_N\tp]\tp\in\mathbb{R}^{DN}$. Recall that the stress matrix from (\ref{equ: stress def}) satisfies $\bm{\Omega}\bm{P}\tp=\bm{0}$, which can be rewritten as $(\bm{\Omega}\otimes\bm{I}_D)\bm{p}=\bm{0}$, where $\bm{p} = \vec(\bm{P})$. It is then straightforward to see that $\bm{p}$ is an equilibrium of system (\ref{equ: global dynamics}). This is considered an extension of the average consensus system using the graph Laplacian, which has only a $1$-dimensional nullspace corresponding to the $\bm{1}_N$ vector, i.e., only configurations with all nodes in the same location can be an equilibrium.

Moreover, since $\bm{\Omega}$ is positive semidefinite, system (\ref{equ: global dynamics}) is globally and exponentially stable \cite{zhao2018affine}, meaning that the formation asymptotically converges to the equilibrium space containing $\bm{p}$ given any initialization. Note that the targeted equilibrium $\bm{p}$ can be reached up to an affine transformation (hence its usefulness for AFC), and can be uniquely determined given a minimum of $D+1$ anchor nodes, which are commonly named leaders. Conceptually, they tie up the loose degrees of freedom from the flexible affine transformations and thus can guide continuous formation maneuvers. This is favored since only a small set of leaders needs to be manipulated, while the majority remains fully autonomous. However, the collective affine flexibility may still be considered limited in some cases for large-scale formation as a whole.

\subsection{Problem Formulation}
Given a potentially large set of agents $\mathcal{V}$ ($N\geq100$) with any configuration $\bm{P}$, we primarily seek a stabilizable stress matrix that yields fast convergence for system (\ref{equ: global dynamics}) and a sparse structure for lower communication load. In consensus theory, the second smallest eigenvalue of the Laplacian (Fiedler value) governs convergence speed \cite{olfati2007consensus}, as it determines the slowest decaying component of the error. Prior works have aimed to maximize this eigenvalue for faster convergence \cite{kim2005maximizing}. Similarly, we aim to maximize the smallest nonzero eigenvalue, i.e., the ($D+2$)-th smallest eigenvalue of ${\bm{\Omega}}$ or $\lambda_{D+2}({\bm{\Omega}})$ for stabilizable stress matrices. Additionally, we seek a control framework that preserves the affine flexibility but could introduce additional degrees of freedom beyond collective affine transformations. 

\section{Optimal Topology Design}\label{sec: methodology}
In this section, we focus on optimal stress matrix design using a convex formulation, considering all the nodes as one group. 
We first present a preliminary formulation that directly reflects the objectives and constraints but is intractable. We then reformulate it into a tractable convex optimization problem and provide insights into the introduced hyperparameters.
Recall from (\ref{equ: stress inc}) that the stress matrix can be constructed using the graph incidence matrix given a stress vector. We initialize the graph as a complete graph with incidence matrix $\bar{\bm{B}}\in\mathbb{R}^{N\times\bar{M}}$ having $\bar{M} = \frac{N(N-1)}{2}$ edges. We then aim to find a sparse vector $\bar{\bm{\omega}}\in\mathbb{R}^{\bar{M}}$ such that $\bm{\Omega}$ is sparse by (\ref{equ: stress inc}). The effective number of edges will be $M = \norm{\bar{\bm{\omega}}}_0$. Hence we pose the following problem $\mathcal{P}_0$:

\begin{subequations} \label{equ: primitive formulation}
  \begin{align}
  & \mathcal{P}_0: \quad \underset{\bar{\bm{\omega}},\bm{\Omega}}{\text{minimize}}
  && \norm{\bar{\bm{\omega}}}_0 - \alpha\lambda_{D+2}(\bm{\Omega}) \label{equ: obj pri} \\
  & \quad \quad \quad \text{subject to}
  && \bm{\Omega}=\bar{\bm{B}}\diag(\bar{\bm{\omega}})\bar{\bm{B}}\tp \label{equ: equality cstr pri}\\
  &&& \rank(\bm{\Omega}) = N-D-1 \label{equ: rank cstr pri}\\
  &&& \bm{\Omega}\succeq 0 \label{equ: PSD cstr pri}\\
  &&& \bm{\Omega}\bar{\bm{P}}\tp=\bm{0} \label{equ: null cstr pri}
  \end{align}
\end{subequations} where $\alpha$ is a weighting parameter between the network sparsity by $\norm{\bar{\bm{\omega}}}_0$ and the convergence speed by $\lambda_{D+2}(\bm{\Omega})$. As can be observed, $\mathcal{P}_0$ is difficult to solve due to the L0-norm and the rank constraint. In the next section, we convexify this problem to a tractable semidefinite program (SDP).

\subsection{Convexification and Eigenvalue Maximization}
A common convex relaxation for promoting sparsity replaces the L0-norm with an L1-norm, i.e., we relax $\norm{\bar{\bm{\omega}}}_0$ to $\norm{\bar{\bm{\omega}}}_1$ in the objective. Further, assuming a full-rank $\bar{\bm{P}}$, we let  $\bm{Q}\in\mathbb{R}^{N\times (N-D-1)}$ with orthonormal columns span the kernel of $\bar{\bm{P}}$. Then, (\ref{equ: rank cstr pri}), (\ref{equ: PSD cstr pri}), and (\ref{equ: null cstr pri}) imply that $\bm{Q}\tp\bm{\Omega}\bm{Q}\in\mathbb{R}^{(N-D-1)\times(N-D-1)}$ is a positive definite (PD) matrix, since the null-subspace of $\bm{\Omega}$ is projected out by $\bm{Q}$, i.e., $\bm{Q}\tp\bm{\Omega}\bm{Q}\succ 0$. As such, this PD constraint is equivalent to (\ref{equ: rank cstr pri}) under (\ref{equ: PSD cstr pri}) and (\ref{equ: null cstr pri}).

To maximize $\lambda_{D+2}(\bm{\Omega})$, we explicitly use the trace of the stress matrix $\tr(\bm{\Omega})=\tr\qty(\bm{Q}\tp\bm{\Omega}\bm{Q})$, a convex proxy which effectively sums all the (nonzero) eigenvalues of $\bm{\Omega}$, together with a spectral norm constraint $\norm{\bm{\Omega}}_2\leq\beta$, where we limit the largest eigenvalue of $\bm{\Omega}$ to $\beta>0$. Without additional objectives and constraints, $\lambda_{D+2}(\bm{\Omega})$ is effectively maximized to $\lambda_{D+2}(\bm{\Omega})=...=\lambda_{\text{max}}(\bm{\Omega}) = \beta$ with a minimum condition number $\kappa(\bm{\Omega})=1$. In our formulation, this boosting effect is further shaped by the sparsity term and other constraints. It is worth mentioning that trace regularization is also commonly seen in network design literature, such as \cite{kim2005maximizing}. Additionally, a low condition number has been shown to improve the robustness of AFC against time delays \cite{xiao2022framework} and hence is aligned with our objective for convergence acceleration.

Having the above remodeling, a working convex formulation is shown in $\mathcal{P}_1$
\begin{subequations}\label{equ: cvx form mat var}
  \begin{align}
  & \mathcal{P}_1: \quad \underset{\bar{\bm{\omega}},\bm{\Omega}}{\text{minimize}}
  && f(\bar{\bm{\omega}}) = \norm{\bar{\bm{\omega}}}_1 - \alpha\tr(\bm{\Omega}) \label{equ: obj func}\\
  & \quad \quad \quad \text{subject to}
  && \bm{\Omega}=\bar{\bm{B}}\diag(\bar{\bm{\omega}})\bar{\bm{B}}\tp \label{equ: stress matvec} \\
  &&& \norm{\bm{\Omega}}_2 \leq \beta \label{equ: cstr max eig}  \\
  &&& \bm{Q}\tp\bm{\Omega}\bm{Q}-\gamma\bm{I}_{N-D-1}\succeq 0 \label{equ: cstr min eig}\\
  &&& \bm{\Omega}\bar{\bm{P}}\tp=\bm{0} \label{equ: null cstr mat}
  \end{align}
\end{subequations}
where $\alpha>0$ is the weighting parameter of the objectives and $\beta>0$ upper bounds the spectral norm of $\bm{\Omega}$. We also set a lower bound $\gamma>0$ for (\ref{equ: cstr min eig}) for two explicit reasons, (a) positive-definiteness $\bm{Q}\tp\bm{\Omega}\bm{Q}\succ 0$ cannot be strictly enforced in numerical solvers; and (b) when the L1 term is dominating, we want to avoid the trivial solution $\bar{\bm{\omega}}=\bm{0}$. Note that $\mathcal{P}_1$ can be solved by off-the-shelf solvers, but could be cumbersome because the problem scales badly, as both $\bar{\bm{\omega}}$ and $\bm{\Omega}$ are optimization variables while linked through an equality constraint. Next, we further simplify $\mathcal{P}_1$ by using only $\bar{\bm{\omega}}$ as the optimization variable, yielding an equivalent but efficient final formulation.

\subsection{Proposed Framework}
Substituting (\ref{equ: stress matvec}) into  (\ref{equ: null cstr mat}), we have $\bar{\bm{P}}\bar{\bm{B}}\diag\qty(\bar{\bm{\omega}})\bar{\bm{B}}\tp = \bar{\bm{P}}\bar{\bm{B}}\diag\qty(\bar{\bm{\omega}})\qty[\bar{\bm{b}}_1,...,\bar{\bm{b}}_N] = \bm{0}$ where $\bar{\bm{b}}_i\in\mathbb{R}^{\bar{M}}, \forall i\in\mathcal{V}$ is the $i$-th column of $\bar{\bm{B}}\tp$. Observe that $\bar{\bm{P}}\bar{\bm{B}}\diag\qty(\bar{\bm{\omega}})\bar{\bm{b}}_i = \bar{\bm{P}}\bar{\bm{B}}\diag\qty(\bar{\bm{b}}_i)\bar{\bm{\omega}}=\bm{0}$. As such, we can construct a matrix $\bm{E}\in\mathbb{R}^{N(D+1)\times \bar{M}}$ with structure
\begin{equation}\label{equ: Zhao2018E}
    \bm{E} = \mqty[\bar{\bm{P}}\bar{\bm{B}}\diag\qty(\bar{\bm{b}}_1) \\ \vdots \\ \bar{\bm{P}}\bar{\bm{B}}\diag\qty(\bar{\bm{b}}_N)],
\end{equation}
such that $\bm{E}\bar{\bm{\omega}} = \bm{0}$, which can replace constraint (\ref{equ: null cstr mat}). Next, we combine (\ref{equ: stress matvec}) and (\ref{equ: cstr min eig}) such that  $\bm{Q}\tp\bm{\Omega}\bm{Q} = \bm{\Psi}\diag(\bar{\bm{\omega}})\bm{\Psi}\tp\succ 0$, where $\bm{\Psi} = \bm{Q}\tp\bar{\bm{B}}$. We can then also write
\begin{align}\label{equ: trace linear}
    \tr\qty(\bm{Q}\tp\bm{\Omega}\bm{Q}) 
  &=  \tr\qty(\bm{\Psi}\diag(\bar{\bm{\omega}})\bm{\Psi}\tp) = \tr\qty(\bm{\Psi}\tp\bm{\Psi}\diag(\bar{\bm{\omega}})) \\ \notag
  &= \tr\qty(\diag(\bm{\Psi}\tp\bm{\Psi})\bar{\bm{\omega}}) = \bm{\psi}\tp\bar{\bm{\omega}},
\end{align}
where $\bm{\psi} = \diag(\bm{\Psi}\tp\bm{\Psi})$. As such, we could rewrite $\mathcal{P}_1$ into an equivalent form $\mathcal{P}_2$
\begin{subequations} \label{equ: main formulation}
  \begin{align}
  & \mathcal{P}_2: \quad \underset{\bar{\bm{\omega}}}{\text{minimize}}
  && \norm{\bar{\bm{\omega}}}_1 - \alpha\bm{\psi}\tp\bar{\bm{\omega}} \label{equ: obj main} \\
  & \quad \quad \quad \text{subject to}
  &&\bm{\Psi}\diag(\bar{\bm{\omega}})\bm{\Psi}\tp - \gamma\bm{I}_{N-D-1} \succeq 0 \label{equ: PSD main}\\
  &&& \norm{\bar{\bm{B}}\diag(\bar{\bm{\omega}})\bar{\bm{B}}\tp}_2 \leq \beta \label{equ: 2norm main}\\
  &&& \bm{E}\bar{\bm{\omega}} = \bm{0} \label{equ: null main}
  \end{align}
\end{subequations}
where there is only $\bar{\bm{\omega}}$ as optimization variable. Next, we discuss the choice of hyperparameters, namely $\alpha$, $\beta$, and $\gamma$ in the formulation.

\subsection{Choice of Hyperparameters}\label{sec: hparams}
The need for $\beta > \gamma$ is straightforward to ensure feasibility since they represent the largest and smallest eigenvalues of $\bm{\Omega}$, respectively. They should also be sufficiently apart for a relatively large feasible region. The objective function contains a sparsity term $\norm{\bar{\bm{\omega}}}_1$ (geometrically a cone) and a convergence term $-\alpha\bm{\psi}\tp\bar{\bm{\omega}}$ (a hyperplane). If the convergence term is dominant, i.e., $\alpha$ is large, the hyperplane will unfold the cone such that its global minimum vanishes, in which case the solution is minimized at the boundary of the feasible region, which is not necessarily a sparse solution. Mathematically, to ensure the existence of a global minimum of the objective function, $\bm{0}_{\bar{M}}$ should be in the subdifferential of the objective function, i.e.,
\begin{equation}
    \bm{0}_{\bar{M}} \in \partial\qty(\norm{\bar{\bm{\omega}}}_1 - \alpha\bm{\psi}\tp\bar{\bm{\omega}}),
\end{equation}
which directly yields that all elements of $\alpha\bm{\psi}$ are in the range $(-1,1)$. Recalling that all elements of $\bm{\psi}$ are nonnegative from the definition $\bm{\psi} = \diag(\bm{\Psi}\tp\bm{\Psi})$ and that $\alpha >0$, we could thus define a critical value $\alpha^* = {1}/{\norm{\bm{\psi}}_\infty}$, where the infinity norm is the largest value of $\bm{\psi}$. The solution will be sparse with maximally accelerated convergence at the critical value $\alpha^*$, and the sparsity will trade off with faster convergence for larger $\alpha$. Note that values below $\alpha^*$ do not necessarily yield a sparser solution. Rather, the dominant L1-norm forces all edge weights closer to zero, including the active stress values, which makes it difficult to extract the graph topology by setting a threshold.

\section{The Unique Stress Identifier (USI) Method}\label{sec: usi}
The previous section proposes a convex formulation to design the stress matrix for any given configuration. In this section, we narrow our scope down to a class of highly symmetric configurations in 2D and 3D, namely, rotationally symmetric configurations, with several examples shown in Fig. \ref{fig: rot-symm configs}.  We reveal that the optimal stress matrices for these configurations are highly structured, and the stress values are also highly repetitive. Therefore, the number of distinct values in the stress for optimization significantly decreases, which motivates the design of a shortcut formulation that optimizes only the unique values of the stress with lower computational complexity. We name this approach the unique stress identifier (USI) method.

\subsection{Optimal Stress for Symmetric Configurations}
Rotationally symmetric configurations are tightly coupled with node permutations, which is a crucial property we leverage to reveal the optimal stress structure. These configurations are formalized in the following definition.

\begin{definitionx}[\textit{Rotationally symmetric configurations}]\label{def: rot-symm configs}
    A rotationally symmetric configuration $\bm{P}\in\mathbb{R}^{D\times N}$ satisfies 
    \begin{equation}\label{equ: rot perm equivalence}
        \bm{R}_k\bm{P} = \bm{P}\bm{\Pi}_k,
    \end{equation}   
    where $\bm{R}_k\in\mathbb{SO}(D)$ and $\bm{\Pi}_k\in\qty{0,1}^{N\times N}$, for $k=1,...,K$, is the $k$-th rotation-permutation pair that preserves the configuration. This means that rotationally symmetric configurations are invariant to rotations up to a reordering of the vertices. 
\end{definitionx}

\begin{figure}[t]
    \centering
    \subfloat[\scriptsize An octagon in 2D]{\raisebox{0mm}{ 
        \includegraphics[width=0.14\textwidth]{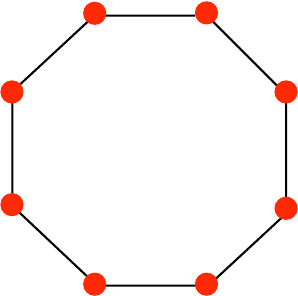}
    }}
    \hspace{0pt} 
    \subfloat[\scriptsize An octahedron in 3D.]{
        \includegraphics[width=0.15\textwidth]{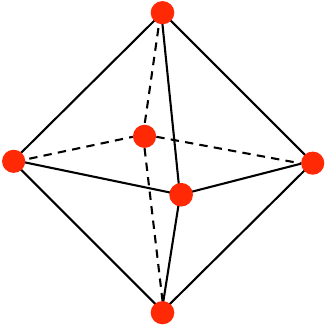}
    }
    \hspace{0pt} 
    \subfloat[\scriptsize A cuboctahedron in 3D.]{\raisebox{0.5mm}{
        \includegraphics[width=0.14\textwidth]{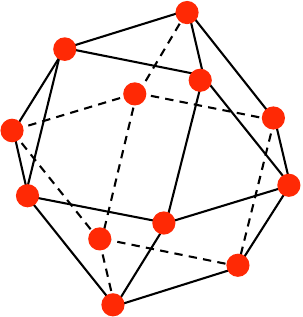}
    }}
    
    \caption{A few examples of rotationally symmetric configurations. Note that the optimal stress for these configurations also applies to their affine transformations.}
    \label{fig: rot-symm configs}
\end{figure}

Since the stress matrix $\bm{\Omega}$ coexists with the stress vector $\bm{\omega}$ through the graph incidence matrix (\ref{equ: stress inc}), there exists a corresponding edge permutation $\bar{\bm{\Pi}}_{\mathcal{E},k}\in\qty{0,1}^{\bar{M}\times \bar{M}}$ for the complete graph characterized by the incidence matrix $\bar{\bm{B}}$ to a given node permutation $\bm{\Pi}_k$, such that 
\begin{equation}\label{equ: node edge perm}
    \bm{\Pi}_k\bm{\Omega}{\bm{\Pi}_k}\tp = \bar{\bm{B}}\diag\qty(\bar{\bm{\Pi}}_{\mathcal{E},k}\bar{\bm{\omega}})\bar{\bm{B}}\tp.
\end{equation}
\noindent This is considered the equivalence of permutations in the node and edge domain. Next, we build up towards the structure of optimal stress matrices under symmetric configurations. For mathematical simplicity, we focus on $\mathcal{P}_1$ (\ref{equ: cvx form mat var}) instead of the equivalent $\mathcal{P}_2$ (\ref{equ: main formulation}).

\begin{definitionx}[\textit{Permutation invariance of feasible sets of $\mathcal{P}_1$}]
Denote the feasible set of $\mathcal{P}_1$ as $\mathcal{C}$. Given a rotationally symmetric $\bm{P}$ as in Definition \ref{def: rot-symm configs}, then the feasible set $\mathcal{C}$ is permutation invariant if, for any feasible $\bar{\bm{\omega}}\in\mathcal{C}$, all the $K$ permutations are also feasible, i.e., $\bar{\bm{\Pi}}_{\mathcal{E},k}\bar{\bm{w}}\in\mathcal{C},\ k=1,...,K$, where $\bar{\bm{\Pi}}_{\mathcal{E},k}\in\qty{0,1}^{\bar{M}\times\bar{M}}$ is an edge permutation related to the $k$-th symmetry rotation.
\end{definitionx}

\begin{definitionx}[\textit{Permutation invariance of objectives of $\mathcal{P}_1$}]\label{def: obj invariance}
Define the objective function of $\mathcal{P}_1$ as $f(\bar{\bm{\omega}})$, where $f: \mathbb{R}^{\bar{M}}\mapsto\mathbb{R}$. Given a rotationally $\bm{P}$ in Definition \ref{def: rot-symm configs}, $f(\bar{\bm{\omega}})$ is permutation invariant if the permutation of the argument $\bar{\bm{\omega}}$ results in the same objective value, i.e., $f(\bar{\bm{\Pi}}_{\mathcal{E},k}\bar{\bm{\omega}}) = f(\bar{\bm{\omega}}), k=1,...,K$.
\end{definitionx}

\begin{lemmax}[\textit{Permutation invariance of $\mathcal{P}_1$}]\label{lmm: prob perm invariance}
    The objective function (\ref{equ: obj func}) and the feasible set $\mathcal{C}$ defined by constraints (\ref{equ: stress matvec}) to (\ref{equ: null cstr mat}) of $\mathcal{P}_1$ are both permutation invariant.
\end{lemmax}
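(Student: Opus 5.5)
The plan is to work with the node-domain version $\bm{\Omega}' = \bm{\Pi}_k\bm{\Omega}\bm{\Pi}_k\tp$. By (\ref{equ: node edge perm}), $\bm{\Omega}'$ is exactly the stress matrix generated by the permuted stress vector $\bar{\bm{\Pi}}_{\mathcal{E},k}\bar{\bm{\omega}}$. Each term of the objective and each constraint of $\mathcal{P}_1$ will then be checked under this conjugation. A preliminary fact used throughout is that $\bar{\bm{\Pi}}_{\mathcal{E},k}$ is itself a permutation matrix on the edges of the complete graph, because relabelling vertices maps edges bijectively to edges; hence it only reorders and possibly sign-flips nothing.

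\emph{Objective.} Since $\bar{\bm{\Pi}}_{\mathcal{E},k}$ merely reorders entries, $\norm{\bar{\bm{\Pi}}_{\mathcal{E},k}\bar{\bm{\omega}}}_1=\norm{\bar{\bm{\omega}}}_1$. Also $\tr(\bm{\Pi}_k\bm{\Omega}\bm{\Pi}_k\tp)=\tr(\bm{\Omega})$ by the cyclic property and orthogonality of $\bm{\Pi}_k$. Hence $f$ is permutation invariant in the sense of Definition~\ref{def: obj invariance}.

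\emph{Feasible set.} Constraint (\ref{equ: stress matvec}) holds for the pair $(\bar{\bm{\Pi}}_{\mathcal{E},k}\bar{\bm{\omega}},\bm{\Omega}')$ by (\ref{equ: node edge perm}). Constraint (\ref{equ: cstr max eig}) holds because $\bm{\Omega}'$ is orthogonally similar to $\bm{\Omega}$ and so has the same spectrum.

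For (\ref{equ: null cstr mat}), the key is that $\bar{\bm{P}}\bm{\Pi}_k = \bar{\bm{R}}_k\bar{\bm{P}}$ with $\bar{\bm{R}}_k=\mathrm{blkdiag}(\bm{R}_k,1)$. This follows from (\ref{equ: rot perm equivalence}) together with $\bm{1}_N\tp\bm{\Pi}_k=\bm{1}_N\tp$. It gives
\begin{equation*}
\bm{\Omega}'\bar{\bm{P}}\tp=\bm{\Pi}_k\bm{\Omega}\bm{\Pi}_k\tp\bar{\bm{P}}\tp=\bm{\Pi}_k\bm{\Omega}\bar{\bm{P}}\tp\bar{\bm{R}}_k=\bm{0}.
\end{equation*}

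For (\ref{equ: cstr min eig}), the same identity shows that $\bm{\Pi}_k\tp$ maps $\ker\bar{\bm{P}}$ onto itself. Indeed, $\bar{\bm{P}}\bm{\Pi}_k\tp\bm{x}=\bar{\bm{R}}_k\tp\bar{\bm{P}}\bm{x}$, using $\bm{\Pi}_k\tp=\bm{\Pi}_k^{-1}$ and $\bar{\bm{R}}_k$ orthogonal. Therefore $\bm{\Pi}_k\tp\bm{Q}=\bm{Q}\bm{U}_k$ for some orthogonal $\bm{U}_k\in\mathbb{R}^{(N-D-1)\times(N-D-1)}$. It follows that
\begin{equation*}
\bm{Q}\tp\bm{\Omega}'\bm{Q}-\gamma\bm{I}=\bm{U}_k\tp\qty(\bm{Q}\tp\bm{\Omega}\bm{Q}-\gamma\bm{I})\bm{U}_k\succeq 0.
\end{equation*}

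The main obstacle is making (\ref{equ: node edge perm}) and the kernel invariance rigorous. One point needs care: $\bar{\bm{B}}$ has an arbitrary orientation per edge. After permutation an edge's orientation may flip, so $\bm{\Pi}_k\bar{\bm{B}}=\bar{\bm{B}}\bar{\bm{\Pi}}_{\mathcal{E},k}\tp\bm{S}_k$ with $\bm{S}_k$ a diagonal $\pm1$ matrix. These signs cancel in $\bar{\bm{B}}\diag(\cdot)\bar{\bm{B}}\tp$ because $\bm{S}_k\diag(\cdot)\bm{S}_k=\diag(\cdot)$. I would verify this entrywise using the definition (\ref{equ: def stress matrix}), noting $[\bm{\Pi}_k\bm{\Omega}\bm{\Pi}_k\tp]_{\pi(i)\pi(j)}=[\bm{\Omega}]_{ij}$. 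This confirms that $\bar{\bm{\Pi}}_{\mathcal{E},k}$ is a genuine permutation without sign changes, which is what the $\ell_1$ invariance argument requires.
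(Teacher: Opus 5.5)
Your proof is correct and follows the paper's route: $\ell_1$ and cyclic-trace invariance for the objective, similarity for the spectral-norm bound, and $\bar{\bm{P}}\bm{\Pi}_k=\bar{\bm{R}}_k\bar{\bm{P}}$ for the null-space constraint. It is more careful than the paper in two places. For (\ref{equ: cstr min eig}) the paper only notes that $\bm{\Pi}_k\bm{\Omega}\bm{\Pi}_k\tp$ has the same eigenvalues as $\bm{\Omega}$, whereas your kernel-invariance argument $\bm{\Pi}_k\tp\bm{Q}=\bm{Q}\bm{U}_k$ shows directly that $\bm{Q}\tp\bm{\Omega}\bm{Q}$ is preserved up to orthogonal similarity; and the paper posits (\ref{equ: node edge perm}) without addressing edge orientations. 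There are two harmless transposition slips: since $\bm{\Pi}_k\tp\bar{\bm{P}}\tp=\bar{\bm{P}}\tp\bar{\bm{R}}_k\tp$, your null-space chain should end in $\bar{\bm{R}}_k\tp$, and consistency with (\ref{equ: node edge perm}) requires $\bm{\Pi}_k\bar{\bm{B}}=\bar{\bm{B}}\bar{\bm{\Pi}}_{\mathcal{E},k}\bm{S}_k$ rather than $\bar{\bm{B}}\bar{\bm{\Pi}}_{\mathcal{E},k}\tp\bm{S}_k$.
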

\noindent\begin{proof}
    See Appendix \ref{sec: properties}.
\end{proof}

Lemma \ref{lmm: prob perm invariance} establishes that if a stress matrix for rotationally symmetric configurations is feasible under $\mathcal{P}_1$, then applying permutations to the stress will preserve the feasibility and the cost value. We then define the permutation-invariant stress matrices, which we show are optimal under $\mathcal{P}_1$.

\begin{definitionx}[\textit{Permutation-invariant stress matrices $\bm{\Omega}_\mathrm{pi}$}]
For a given rotationally symmetric configuration, a permutation-invariant $\bm{\Omega}_\mathrm{pi}\in\mathbb{R}^{N\times N}$ with its vector counterpart $\bar{\bm{\omega}}_{\mathrm{pi}}\in\mathbb{R}^{\bar{M}}$ satisfying $\bm{\Omega}_{\mathrm{pi}} = \bar{\bm{B}}\diag(\bar{\bm{\omega}}_{\mathrm{pi}})\bar{\bm{B}}\tp$, satisfies 
\begin{equation}\label{equ: symmetric sol}
    \bm{\Omega}_\mathrm{pi} = \bm{\Pi}_k\bm{\Omega}_\mathrm{pi}{\bm{\Pi}_k}\tp,
\end{equation}
where $\bm{\Pi}_k$ for $k=1,...,K$ denotes the permutation from (\ref{equ: rot perm equivalence}) in Definition \ref{def: rot-symm configs}.
\end{definitionx}


\begin{theoremx}[\textit{Optimality of permutation-invariant stress}]\label{thm: opt symm sol}
    Given a rotationally symmetric configuration, there exists a permutation-invariant $\bm{\Omega}_{\mathrm{pi}}$ that is feasible and optimal for $\mathcal{P}_1$.
\end{theoremx}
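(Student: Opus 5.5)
The plan is the standard symmetrization (group-averaging) argument for convex problems with a symmetry group. We take an optimal solution of $\mathcal{P}_1$, average it over all the symmetry permutations, and use Lemma~\ref{lmm: prob perm invariance} together with convexity to show the average is still feasible and optimal.

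\emph{Step 1: existence of an optimum.} The feasible set $\mathcal{C}$ is closed and convex, being cut out by linear equalities, a spectral-norm ball and an LMI. It is bounded, because (\ref{equ: cstr max eig}) bounds $\bm{\Omega}$, and $\bar{\bm{\omega}}$ is determined by $\bm{\Omega}$ through (\ref{equ: stress matvec}): the off-diagonal entries of $\bm{\Omega}$ are $-\bar{\omega}_{ij}$ for the complete graph. The objective $f$ is continuous. Hence, whenever $\mathcal{C}$ is nonempty, an optimal $\bar{\bm{\omega}}^\star$ with $\bm{\Omega}^\star$ exists. We assume $\mathcal{C}\neq\emptyset$, as is implicit in the statement.

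\emph{Step 2: closing the symmetry set under composition.} We make the set $\{\bm{\Pi}_k\}$ a group. If $\bm{R}_k\bm{P}=\bm{P}\bm{\Pi}_k$ and $\bm{R}_l\bm{P}=\bm{P}\bm{\Pi}_l$, then $\bm{R}_k\bm{R}_l\bm{P}=\bm{P}\bm{\Pi}_l\bm{\Pi}_k$, so products are again rotation--permutation pairs. The set is finite, so it is closed under inverses too, and we may take $\{\bm{\Pi}_k\}_{k=1}^K$ to be the full finite group $\mathcal{G}_\Pi$, including the identity. This closure is needed so that the averaged matrix is invariant under every $\bm{\Pi}_k$. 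It is the point requiring the most care: Definition~\ref{def: rot-symm configs} only lists the pairs, and the correspondence (\ref{equ: node edge perm}) with edge permutations $\bar{\bm{\Pi}}_{\mathcal{E},k}$ must respect composition. That compatibility holds because $\bar{\bm{\Pi}}_{\mathcal{E},k}$ is induced by the node permutation acting on unordered pairs, so the map $\bm{\Pi}\mapsto\bar{\bm{\Pi}}_{\mathcal{E}}$ is a group homomorphism.

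\emph{Step 3: averaging.} Define
\[
\bar{\bm{\omega}}_{\mathrm{pi}}=\frac{1}{K}\sum_{k=1}^K\bar{\bm{\Pi}}_{\mathcal{E},k}\bar{\bm{\omega}}^\star,
\]
and correspondingly, by linearity of (\ref{equ: stress matvec}) and (\ref{equ: node edge perm}),
\[
\bm{\Omega}_{\mathrm{pi}}=\frac{1}{K}\sum_{k}\bm{\Pi}_k\bm{\Omega}^\star\bm{\Pi}_k\tp .
\]
By Lemma~\ref{lmm: prob perm invariance}, each $\bar{\bm{\Pi}}_{\mathcal{E},k}\bar{\bm{\omega}}^\star\in\mathcal{C}$. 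Convexity of $\mathcal{C}$ then gives $\bar{\bm{\omega}}_{\mathrm{pi}}\in\mathcal{C}$. Since $f$ is convex (an $\ell_1$ norm plus a linear term), Jensen's inequality and the invariance $f(\bar{\bm{\Pi}}_{\mathcal{E},k}\bar{\bm{\omega}}^\star)=f(\bar{\bm{\omega}}^\star)$ give
\[
f(\bar{\bm{\omega}}_{\mathrm{pi}})\le \frac{1}{K}\sum_{k=1}^K f(\bar{\bm{\Pi}}_{\mathcal{E},k}\bar{\bm{\omega}}^\star)=f(\bar{\bm{\omega}}^\star).
\]
So $\bar{\bm{\omega}}_{\mathrm{pi}}$ is feasible and has objective value at most the optimum, hence it is optimal.

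\emph{Step 4: invariance.} For any $l$,
\[
\bm{\Pi}_l\bm{\Omega}_{\mathrm{pi}}\bm{\Pi}_l\tp=\frac{1}{K}\sum_k(\bm{\Pi}_l\bm{\Pi}_k)\bm{\Omega}^\star(\bm{\Pi}_l\bm{\Pi}_k)\tp=\bm{\Omega}_{\mathrm{pi}},
\]
because $k\mapsto \bm{\Pi}_l\bm{\Pi}_k$ is a bijection of the group from Step 2. This is exactly (\ref{equ: symmetric sol}), which completes the proof.
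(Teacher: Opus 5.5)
Your proposal is correct and follows the paper's argument: average an optimal $\bar{\bm{\omega}}^\star$ over the symmetry permutations, then get feasibility from convexity of $\mathcal{C}$ and optimality from Jensen's inequality with Lemma~\ref{lmm: prob perm invariance}. Your Steps 1, 2 and 4 add rigor the paper skips (existence of a minimizer by compactness, and closing $\{\bm{\Pi}_k\}$ into a group so the average is provably invariant, which the paper only asserts), and your one slip is harmless: $\bm{R}_k\bm{R}_l\bm{P}=\bm{R}_k\bm{P}\bm{\Pi}_l=\bm{P}\bm{\Pi}_k\bm{\Pi}_l$, not $\bm{P}\bm{\Pi}_l\bm{\Pi}_k$, but closure under products holds either way.
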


\noindent\begin{proof}
Assume $\bar{\bm{w}}^*\in\mathbb{R}^{\bar{M}}$ is a minimizer of $\mathcal{P}_1$ with its matrix counterpart $\bm{\Omega}^* = \bar{\bm{B}}\diag(\bar{\bm{\omega}}^*)\bar{\bm{B}}\tp$, which is not necessarily permutation-invariant. Using the invariance of feasible sets and objective functions in Lemma \ref{lmm: prob perm invariance}, we can find $K$ feasible stress by permuting the configuration with the same objective value, i.e., $\bar{\bm{\Pi}}_{\mathcal{E},k}\bar{\bm{\omega}}^*$ or $\bm{\Pi}_{k}\bm{\Omega}^*\bm{\Pi}_{k}\tp$ for $k=1,...,K$. Then we construct a permutation-invariant stress by averaging across all the permutations, i.e., $\bm{\Omega}_\mathrm{pi} = \frac{1}{K}\sum_{k=1}^{K}\bm{\Pi}_k\bm{\Omega}^*{\bm{\Pi}_k}\tp$ and
\begin{equation}
    \bar{\bm{\omega}}_{\mathrm{pi}} = \frac{1}{K}\sum_{k=1}^{K}\bar{\bm{\Pi}}_{\mathcal{E},k}\bar{\bm{\omega}}^*,
\end{equation}
which is also feasible since it is a convex combination of feasible points. Using Jensen's inequality for convex functions, observe that
\begin{equation}
    f\qty(\bar{\bm{\omega}}_{\mathrm{pi}}) = f\qty(\frac{1}{K}\sum_{k=1}^{K}\bar{\bm{\Pi}}_{\mathcal{E},k}\bar{\bm{\omega}}^*)\leq\frac{1}{K}\sum_{k=1}^{K}f\qty(\bar{\bm{\Pi}}_{\mathcal{E},k}\bar{\bm{\omega}}^*).
    \end{equation}
Since the permutations have the same objective from the invariance of $\mathcal{P}_1$, we know $f\qty(\bar{\bm{\Pi}}_{\mathcal{E},k}\bar{\bm{\omega}}^*) = f(\bar{\bm{\omega}}^*)$ for all $k$, which can be substituted into the inequality and yield
\begin{equation}
    f\qty(\bar{\bm{\omega}}_{\mathrm{pi}}) \leq \frac{1}{K}\sum_{k=1}^{K}f(\bar{\bm{\omega}}^*) = \frac{1}{K}Kf(\bar{\bm{\omega}}^*) = f(\bar{\bm{\omega}}^*).
\end{equation}
This means a permutation-invariant stress will achieve equal or lower cost than the assumed minimizer with arbitrary structure.  In other words, the permutation-invariant stress is the optimal solution to $\mathcal{P}_1$.
\end{proof}

\noindent From Theorem \ref{thm: opt symm sol} it is guaranteed that problem $\mathcal{P}_1$ (or equivalently $\mathcal{P}_2$) under a rotationally symmetric configuration $\bm{P}$ following (\ref{equ: rot perm equivalence}), has an optimal permutation-invariant solution $\bm{\Omega}_{\mathrm{pi}}$. Next, we show that $\bm{\Omega}_{\mathrm{pi}}$ has identical stress values if their corresponding edges in the complete graph can be mapped from one to the other using a symmetric rotation-permutation pair, just like the invariance (\ref{equ: symmetric sol}) suggests.
\begin{corollaryx}[\textit{Edge equivalence in the stress matrix}]\label{crl: edge equivalence}
    Let $\bm{P}$ be a rotationally symmetric configuration and $\bm{\Omega}_{\mathrm{pi}}$ satisfying (\ref{equ: symmetric sol}) be the optimal solution. Two vertex pairs, $(i,j)$ and $(g,l)$, are defined as \emph{symmetrically equivalent} if there exists a symmetry rotation-permutation pair $(\bm{R}_k,\bm{\Pi}_k)$ that maps vertex $i$ to $g$ and vertex $j$ to $l$.

    If the vertex pair $(i,j)$ is symmetrically equivalent to the vertex pair $(g,l)$, their corresponding stresses are equal, i.e.,
    \begin{equation}\label{equ: stress symm equivalence}
        \qty[\bm{\Omega}_\mathrm{pi}]_{ij}=\qty[\bm{\Omega}_\mathrm{pi}]_{gl}.
    \end{equation}

\end{corollaryx}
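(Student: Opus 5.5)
The plan is to read off the entrywise meaning of the invariance (\ref{equ: symmetric sol}). Let $\pi_k$ be the vertex permutation encoded by $\bm{\Pi}_k$. We fix the convention that $(\bm{\Pi}_k)_{ab}=1$ if and only if $a=\pi_k(b)$. Then (\ref{equ: rot perm equivalence}) reads $\bm{R}_k\bm{p}_b=\bm{p}_{\pi_k(b)}$, which states that the symmetry rotation carries vertex $b$ to vertex $\pi_k(b)$. With the other convention the roles of $\bm{\Pi}_k$ and $\bm{\Pi}_k\tp$ are exchanged. The argument below still goes through, because the set of symmetries is closed under inversion: $\bm{R}_k^{-1}\bm{P}=\bm{P}\bm{\Pi}_k\tp$ is again a rotation-permutation pair preserving the configuration. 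So we may replace $k$ by the index of the inverse pair if needed.

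First, compute $[\bm{\Pi}_k\bm{\Omega}\bm{\Pi}_k\tp]_{ab}=\sum_{c,d}[\bm{\Pi}_k]_{ac}[\bm{\Omega}]_{cd}[\bm{\Pi}_k]_{bd}=[\bm{\Omega}]_{\pi_k^{-1}(a)\pi_k^{-1}(b)}$. Combined with (\ref{equ: symmetric sol}), this gives $[\bm{\Omega}_{\mathrm{pi}}]_{ab}=[\bm{\Omega}_{\mathrm{pi}}]_{\pi_k^{-1}(a)\pi_k^{-1}(b)}$ for all $a,b$ and all $k$.

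Second, assume $(i,j)$ is symmetrically equivalent to $(g,l)$. Then some pair $(\bm{R}_k,\bm{\Pi}_k)$ satisfies $\pi_k(i)=g$ and $\pi_k(j)=l$. Substituting $a=g$ and $b=l$ into the identity from the first step yields $[\bm{\Omega}_{\mathrm{pi}}]_{gl}=[\bm{\Omega}_{\mathrm{pi}}]_{ij}$, which is (\ref{equ: stress symm equivalence}).

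Third, for the stress vector itself, off-diagonal entries satisfy $[\bm{\Omega}]_{ij}=-\omega_{ij}$ by (\ref{equ: def stress matrix}). The equality of matrix entries therefore transfers directly to equality of the edge stresses, and it matches the edge permutation $\bar{\bm{\Pi}}_{\mathcal{E},k}$ in (\ref{equ: node edge perm}). The same computation with $a=b$ also shows that diagonal entries of equivalent vertices coincide.

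The only delicate step is the bookkeeping of the permutation convention, namely whether $\bm{\Pi}_k$ sends $i$ to $g$ or the reverse. I would handle it through the closure of the symmetry set under inverses, as explained above. This closure holds because, with $\bm{P}$ full-rank, the pairs form a group; the averaging in Theorem~\ref{thm: opt symm sol} implicitly uses the full group anyway. Existence of $\bm{\Omega}_{\mathrm{pi}}$ is taken from Theorem~\ref{thm: opt symm sol}.
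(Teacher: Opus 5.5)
Your proof is correct and is essentially the paper's: you read off the $(g,l)$ entry of $\bm{\Pi}_k\bm{\Omega}_{\mathrm{pi}}\bm{\Pi}_k\tp$ using $[\bm{\Pi}_k]_{gi}=[\bm{\Pi}_k]_{lj}=1$, with the same convention as the paper, and then invoke the invariance (\ref{equ: symmetric sol}). The detour through closure of the symmetries under inverses is unnecessary, because $\bm{\Omega}_{\mathrm{pi}}=\bm{\Pi}_k\bm{\Omega}_{\mathrm{pi}}\bm{\Pi}_k\tp$ is equivalent to $\bm{\Omega}_{\mathrm{pi}}=\bm{\Pi}_k\tp\bm{\Omega}_{\mathrm{pi}}\bm{\Pi}_k$, so either convention works without any group-structure assumption.
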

\noindent\begin{proof}
    If permutation $\bm{\Pi}_k$ maps index $i$ to $g$ and $j$ to $l$, then $\qty[\bm{\Pi}_k]_{gi}=1$ and $\qty[\bm{\Pi}_k]_{lj}=1$. Since $\bm{\Pi}_k$ is a symmetric permutation, it holds that $\bm{\Omega}_\mathrm{pi} = \bm{\Pi}_k\bm{\Omega}_\mathrm{pi}{\bm{\Pi}_k}\tp$, which can be written in its matrix element form as
    \begin{equation}\label{equ: stress symm mat element}
        \qty[\bm{\Pi}_k\bm{\Omega}_\mathrm{pi}\bm{\Pi}_k\tp]_{gl} = \sum_{p=1}^N\sum_{q=1}^N\qty[\bm{\Pi}_k]_{gp}\qty[\bm{\Omega}_\mathrm{pi}]_{pq}\qty[\bm{\Pi}_k\tp]_{ql}.
    \end{equation}
    Since $\qty[\bm{\Pi}_k]_{gp}=1$ only when $p=i$ and otherwise is zero, and $\qty[\bm{\Pi}_k\tp]_{ql} = \qty[\bm{\Pi}_k]_{lq}=1$ only when $q=j$ and otherwise is zero, (\ref{equ: stress symm mat element}) simplifies to $\qty[\bm{\Pi}_k\bm{\Omega}_\mathrm{pi}\bm{\Pi}_k\tp]_{gl}=\qty[\bm{\Omega}_\mathrm{pi}]_{ij}=\qty[\bm{\Omega}_\mathrm{pi}]_{gl}$. Hence (\ref{equ: stress symm equivalence}) is proved.
\end{proof}
    
\noindent Corollary \ref{crl: edge equivalence}shows that for symmetric configurations, edges that map from one to the other have the same value in the optimal solution, which motivates us to use the distinct edge (stress) classes instead of for the complete graph for the optimization. Before we implement this observation into an efficient algorithm design, we discuss a special case, where the nodes lie on a circle in 2D, forming regular polygons.

\begin{remarkx}[\textit{Stress structure for circular configurations}]
If the nodes lie on a circle and their configuration $\bm{P}$ is ordered sequentially, the permutation matrices $\bm{\Pi}_k$ for $k=1,...,K$ for (\ref{equ: rot perm equivalence}) are limited to cyclic permutations. Then the $\bm{\Omega}_\mathrm{pi}$ is a circulant (also Toeplitz) matrix, as shown in the example in Fig. \ref{fig: generic-usi-edm}.
\end{remarkx}

\subsection{Efficient Algorithm Design}
Motivated by Corollary \ref{crl: edge equivalence}, we would first identify $S$ unique stress classes from the symmetric configuration. However, traversing the matching edge pairs is nontrivial for complicated configurations. It is straightforward to see that having identical edge lengths (Euclidean distances between vertices) is a necessary condition for edge symmetric equivalence. As such, we propose an approximation using the Euclidean distance matrix (EDM) defined by \begin{equation}
        \qty[\edm(\bm{P})]_{ij} = \norm{\bm{p}_i-\bm{p}_j}^2_2 \ \quad i\in\mathcal{V},j\in\mathcal{V},
    \end{equation}
as an initial attempt to identify the $S$ unique stress classes. Edges with the same length shall be considered the same class.

After identifying the unique edge classes, we could create a mapping $\bar{\bm{\omega}} = \bm{S}\underline{\bm{\omega}}$ where $\underline{\bm{\omega}}\in\mathbb{R}^S$ is the reduced stress vector, and $\bm{S}\in\qty{0,1}^{\bar{M}\times S}$ is a selection matrix with $\bm{s}_s \in \qty{0,1}^{\bar{M}}$ denoting the $s$-th column. Then the mapping could be expanded to $\bar{\bm{\omega}} = \sum_{s=1}^S \underline{\omega}_s \bm{s}_s$ where $\underline{\omega}_s$ is the s-th element of $\underline{\bm{\omega}}$. The objective of $\mathcal{P}_2$ (\ref{equ: obj main}) could be substituted into 
\begin{subequations}
    \begin{align}
    \norm{\bar{\bm{\omega}}}_1 - \alpha\bm{\psi}\tp\bar{\bm{\omega}} &= \norm{\bm{S}\underline{\bm{\omega}}}_1 - \alpha\bm{\psi}\tp(\bm{S}\underline{\bm{\omega}})\\
    &= \sum_{s=1}^S(\bm{1}_{\bar{M}}\tp\bm{s}_s)\underline{\omega}_s - \alpha(\bm{S}\tp\bm{\psi})\tp\underline{\bm{\omega}}\\
    &=  \bm{c}\tp\abs{\underline{\bm{\omega}}} - \alpha\underline{\bm{\psi}}\tp\underline{\bm{\omega}},
\end{align}
\end{subequations}
where $\bm{c} = \bm{S}\tp\bm{1}_{\bar{M}}\in\mathbb{Z}_{+}^S$ and $\underline{\bm{\psi}}=\bm{S}\tp\bm{\psi}\in\mathbb{R}^S$ can be precomputed. The PSD constraint (\ref{equ: PSD main}) can be replaced by
\begin{subequations}
\begin{align}
    \bm{\Psi}\diag(\bar{\bm{\omega}})\bm{\Psi}\tp &= \bm{\Psi}\diag\qty(\sum_{s=1}^S \underline{\omega}_s \bm{s}_s)\bm{\Psi}\tp\\
    &= \sum_{s=1}^S\underline{\omega}_s\qty( \bm{\Psi}\diag(\bm{s}_s)\bm{\Psi}\tp)\\
    &= \sum_{s=1}^S\underline{\omega}_s\bm{\Psi}_s,  
\end{align}
\end{subequations}
where $\bm{\Psi}_s = \bm{\Psi}\diag(\bm{s}_s)\bm{\Psi}\tp$ can also be precomputed. Similarly, the spectral norm constraint (\ref{equ: 2norm main}) is
\begin{subequations}
    \begin{align}
        \norm{\bar{\bm{B}}\diag(\bar{\bm{\omega}})\bar{\bm{B}}\tp}_2 &= \sum_{s=1}^S\underline{\omega}_s\qty(\bar{\bm{B}}\diag(\bm{s}_s)\bar{\bm{B}}\tp)\\
        &= \sum_{s=1}^S\underline{\omega}_s\bm{M}_s,
    \end{align}
\end{subequations}
where the basis matrix $\bm{M}_s=\bar{\bm{B}}\diag(\bm{s}_s)\bar{\bm{B}}\tp$ is known. Lastly, the nullspace constraint (\ref{equ: null main}) can be expressed as
\begin{equation}
    \bm{E}\bar{\bm{\omega}} = (\bm{E}\bm{S})\underline{\bm{\omega}} = \underline{\bm{E}}\underline{\bm{\omega}} = \bm{0},
\end{equation}
where $\underline{\bm{E}} = \bm{E}\bm{S}$. Having reduced all the expressions in $\mathcal{P}_2$, we introduce the reduced formulation $\mathcal{P}_3$
\begin{subequations}
  \begin{align}
  & \mathcal{P}_3: \quad \underset{\underline{\bm{\omega}}\in\mathbb{R}^S}{\text{minimize}}
  && \bm{c}\tp\abs{\underline{\bm{\omega}}} - \alpha\underline{\bm{\psi}}\tp\underline{\bm{\omega}} \\
  & \quad \quad \quad \text{subject to}
  &&\sum_{s=1}^S\underline{{\omega}}_s\bm{\Psi}_s - \gamma\bm{I}_{N-D-1} \succeq 0 \\
  &&& \norm{\sum_{s=1}^S\underline{{\omega}}_s\bm{M}_s}_2 \leq \beta\\
  &&& \underline{\bm{E}}\underline{\bm{\omega}} = \bm{0}
  \end{align}
\end{subequations}

A clarifying diagram is provided in (\ref{equ: all formulations}), summarizing all the proposed formulations for the stress design.
\begin{equation}\label{equ: all formulations}
    \mathcal{P}_0 \xRightarrow{\text{relax}} \mathcal{P}_1 \xLeftrightarrow{\text{equivalent}} \mathcal{P}_2 \xLeftrightarrow[\text{configurations}]{\text{symmetric}} \mathcal{P}_3
\end{equation}
Fig. \ref{fig: generic-usi-edm} visualizes a few examples of the stress matrix optimized through $\mathcal{P}_3$ and $\mathcal{P}_2$ in comparison with their corresponding EDMs. As can be seen, the EDM correctly identifies the stress classes in comparison with the generic formulation $\mathcal{P}_2$, and the reduced-size formulation $\mathcal{P}_3$ can find identical solutions. In practice, the EDM approximation often works robustly for symmetric configurations, although it identifies equal or fewer stress classes than the rigorous equivalence requirement.

\begin{figure}[t]
    \centering%
    \includegraphics[width=1\linewidth]{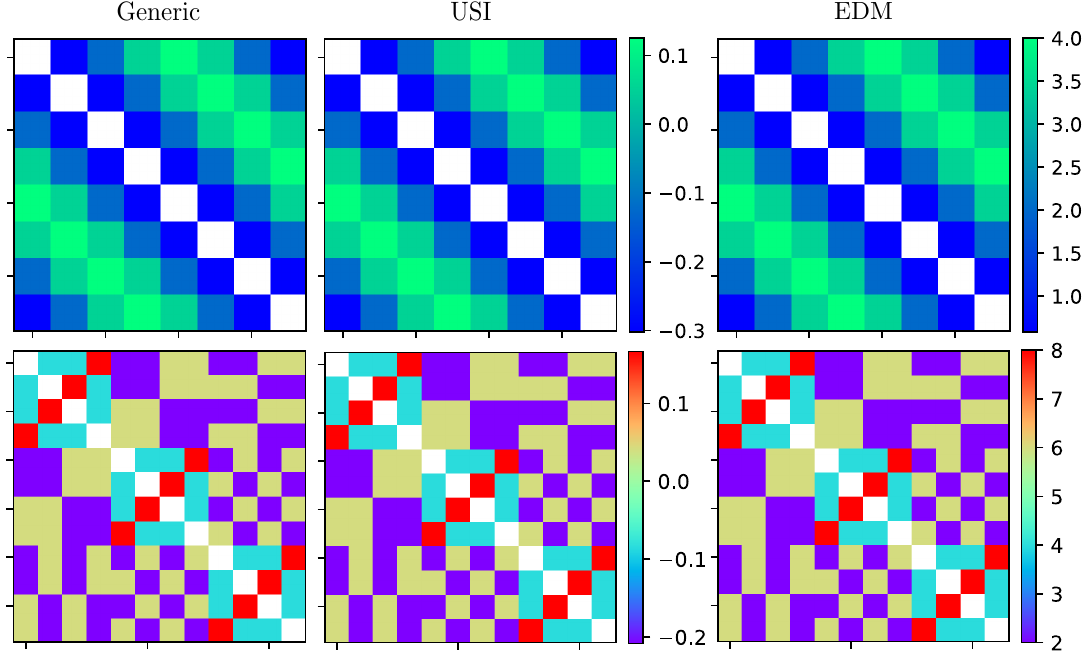}
    \caption{Stress matrices compared with EDMs using an octagon in Fig. \ref{fig: rot-symm configs}(a) (upper) and a cuboctahedron in Fig. \ref{fig: rot-symm configs}(c) (lower), which illustrate the equivalence of the proposed optimization (Generic $\mathcal{P}_2$) and the unique stress identifier (USI) method $\mathcal{P}_3$ on rotationally symmetric configurations. The stress matrices for the octagon are symmetric, Toeplitz, and circulant. The diagonals of matrices are made empty as they do not represent distinct edge classes but are determined by the others. Matching colors represent the same stress values for the first two columns, and the same stress classes to compare with the EDM.  }%
    \label{fig: generic-usi-edm}%
\end{figure}

\subsection{Complexity Analysis}
In general, semidefinite programming can be solved by interior point methods (IPMs) in roughly cubic complexity w.r.t.~the number of decision variables \cite{de2002aspects}, i.e., the size of the stress vector $\bar{\bm{\omega}}$ for $\mathcal{P}_2$, or $\underline{\bm{\omega}}$ for $\mathcal{P}_3$. This is already advantageous to the NP-hard MISDP formulation \cite{xiao2022framework} that scales exponentially. We give a complexity analysis of the proposed $\mathcal{P}_2$ and $\mathcal{P}_3$ w.r.t.~the number of decision variables and the prospects of further reduction using the multicluster framework in the following sections.

Recollect that the generic formulation $\mathcal{P}_2$ has $\bar{M} = \frac{N(N-1)}{2}$ variables for $N$ nodes, which scales quadratically $\mathcal{O}(N^2)$ in terms of the number of variables. For symmetric configurations, this reduces to $\mathcal{O}(S)$ using $\mathcal{P}_3$. For regular 2D polygons, $S$ scales linearly with the number of nodes, hence admitting $\mathcal{O}(N)$ variables. For 3D rotationally symmetric structures, $S$ should be discussed on a case-by-case basis. However, for highly symmetric configurations, such as the Archimedean solids, it is typical that $S\ll N^2$, which permits a massive reduction in the complexity. An example is the truncated icosahedron with $60$ nodes (used in our benchmark as shown in Fig. \ref{fig: benchmark configs} (d)), which has $\bar{M} = 1770$ variables for $\mathcal{P}_2$ but has only $S=21$ variables using our USI for $\mathcal{P}_3$.

Although the rotationally symmetric configurations and their affine transformations cover a large class of nongeneric geometries, $\mathcal{P}_2$ for generic configurations is still inevitable in many cases. The quadratic scaling in SDP variables can be concerning for a large number of nodes. The next section introduces the multicluster control framework, where the large number of nodes is partitioned into smaller clusters, whose stress matrices can be designed independently and more efficiently, which further reduces the overall design complexity.

\section{Multicluster Control of Large Networks}\label{sec: mc control}
In previous sections, we established a convex optimization framework for designing sparse networks with high convergence speed. The generic formulation $\mathcal{P}_2$ scales quadratically with the nodes for an SDP formulation, which becomes demanding for large-scale networks ($N > 100$). In this section, we introduce a divide-and-conquer strategy that partitions a large network into several smaller subnetworks with overlapping nodes, thereby significantly reducing the overall design complexity. Additionally, long-distance communications can also be avoided with the overlapping nodes acting as relays. We propose control strategies with proofs of stability and analyze the associated convergence properties. Furthermore, we derive conditions on the overlapping nodes under which the entire ensemble executes a unified collective affine motion like the single cluster case, as well as conditions under which different clusters exhibit coordinated collective behaviors while preserving individual flexibility. We emphasize that our proposed framework is instantiated using the classic control law (\ref{equ: conventional local law}), but it also applies to its generalizations.

Formally, given a nominal configuration $\bm{P}(\mathcal{V})\in\mathbb{R}^{D\times N}$, we divide it into $C$ subconfigurations $\bm{P}_c(\mathcal{V}_c)\in\mathbb{R}^{D\times N_c}$ for $c=1,...,C$, where $\mathcal{V}_c$ is the subset of nodes for subconfiguration $c$, and $N_c\geq D+1$ is the number of nodes in subconfiguration $c$. We require overlapping nodes among the subconfigurations, formally named \textit{bridging nodes}, such that $\sum_{c=1}^{C}N_c>N$. For each subconfiguration $\bm{P}_c$, a stress matrix $\bm{\Omega}_c\in\mathbb{R}^{N_c\times N_c}$ can be designed using the generic formulation $\mathcal{P}_2$. Then the total number of variables is reduced from $\mathcal{O}(N^2)$ to $\mathcal{O}(\sum_{c=1}^CN_c^2) = \mathcal{O}(CN_c^2)$ if all clusters are assigned an equal number of nodes $N_c$ for a sequential design procedure. If stresses for clusters are designed in parallel, this further reduces to $\mathcal{O}(\underset{c}{\max} N_c^2)$. If $\bm{P}_c$ is symmetric, we could use the accelerated $\mathcal{P}_3$. Let $C_i$ denote the number of clusters node $i$ belongs to, then $C_i=1$ if node $i$ is a nonbridging node, whereas $1<C_i\leq C$ if it is a bridging node.

Based on these clusters, we can now propose a distributed control strategy as detailed in Algorithm~\ref{alg: large scale control}. For agent $i\in\mathcal{V}$ at time $t$, randomly select a cluster $c$ among all possible $C_i$ clusters. Compute the velocity control input for node $i$ related to the selected cluster as 
\begin{equation}\label{equ: local multicluster control}
    \bm{u}_i = -\frac{1}{\pi_{i}^c}\sum_{j\in\mathcal{N}_i^c}\qty[\bm{\Omega}_c]_{ij}\qty(\bm{z}_j-\bm{z}_i),
\end{equation}
where $\mathcal{N}_i^c$ is the neighbor set for agent $i$ in cluster $c$, and the gain $\pi_i^c$ is the probability of agent $i$ choosing cluster $c$. We opt for a discrete uniform distribution, i.e., $\pi_i^c = \frac{1}{C_i}$. For the non-bridging nodes, the selection is deterministic since $C_i=1$.

\begin{algorithm}[t]
    \caption{\textit{Multicluster control}}\label{alg: large scale control}
    \begin{algorithmic}[1]
        \State \textbf{Input}: Configuration  $\bm{P}\in\mathbb{R}^{D\times N}$
        \State \textbf{Initialize}: $\bm{z}_i$ for $i\in\mathcal{V}$
        \State Divide $\bm{P}$ into $\bm{P}_c\in\mathbb{R}^{D\times N_c}$ for $c=1,...,C$
        \State Calculate $\bm{\Omega}_c\in\mathbb{R}^{N_c\times N_c}$ for $c=1,...,C$
        \For {$i\in\mathcal{V}$}
            \State Randomly choose cluster $c$
            \State Exchange for $\bm{z}_j\in\mathcal{N}_j^c$
            \State Calculate $\bm{u}_i = -\frac{1}{\pi_{i}^c}\sum_{j\in\mathcal{N}_i^c}\qty[\bm{\Omega}_c]_{ij}\qty(\bm{z}_j-\bm{z}_i)$
            \State Execute $\dot{\bm{z}}_i = \bm{u}_i$ 
            
        \EndFor

    \end{algorithmic}
\end{algorithm}

\subsection{Stability Analysis}
Controller (\ref{equ: local multicluster control}) represents a linear time-varying (LTV) system, for which the instantaneous system matrix is time-varying and asymmetric. We consider the mean dynamics
\begin{equation}
    \mathbb{E}\qty[\dot{\bm{z}_i}] = \mathbb{E}\qty[\bm{u}_i].
\end{equation}
Taking the expectation for the local dynamics (\ref{equ: local multicluster control}), we obtain
\begin{subequations}
    \begin{align}
    \mathbb{E}[\bm{u}_i] &= \sum_{c=1}^{C} \pi_{i}^c \qty( - \frac{1}{\pi_{i}^c} \sum_{j \in \mathcal{N}_i^c} [\bm{\Omega}_c]_{ij} (\bm{z}_j - \bm{z}_i)) \\
    &= - \sum_{c=1}^{C} \sum_{j \in \mathcal{N}_i^c} [\bm{\Omega}_c]_{ij} (\bm{z}_j - \bm{z}_i).
\end{align}
\end{subequations}
Notice that the probability term $\pi_{i}^c$ cancels with the gain compensation factor. Rewriting this in global form, the mean dynamics become:
\begin{equation}\label{equ: global mean dynamics}
    \mathbb{E}\qty[\dot{\bm{z}}] = - \qty(  (\sum_{c=1}^{C} \tilde{\bm{\Omega}}_c ) \otimes \bm{I}_D ) \mathbb{E}\qty[{\bm{z}}] = - (\bm{\Omega}\otimes \bm{I}_D)\mathbb{E}\qty[{\bm{z}}]
\end{equation}
where $\tilde{\bm{\Omega}}_c \in\mathbb{R}^{N\times N}$ is the expansion of the stress matrix $\bm{\Omega}_c$ into the global dimension with zero padding, and the equivalent global ensemble $\bm{\Omega} = \sum_{c=1}^{C} \tilde{\bm{\Omega}}_c$. A visualization of (\ref{equ: global mean dynamics}) is shown in Fig. \ref{fig: illus mean dynamics}, where the 2 clusters are color coded, and the layers distinguish the stress matrices to be added for the ensemble $\bm{\Omega}$. Note that the mean dynamics yield a linear time-invariant (LTI) system, which provides insight for the convergence property. 

\begin{theoremx}[\textit{Convergence of mean dynamics}]\label{thm: stability of mean dynamics}
    Given $\bm{p}=\vec(\bm{P})$ as a target configuration, the mean dynamics (\ref{equ: global mean dynamics}) exponentially converge to a space containing $\bm{p}$ given arbitrary initialization.
\end{theoremx}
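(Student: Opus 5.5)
The plan is to use the fact that the mean dynamics (\ref{equ: global mean dynamics}) form an LTI system whose system matrix $-(\bm{\Omega}\otimes\bm{I}_D)$ is symmetric, with $\bm{\Omega}=\sum_{c=1}^C\tilde{\bm{\Omega}}_c$. The argument has three steps: show that $\bm{\Omega}$ is PSD, show that $\bm{p}$ lies in its kernel, and then read off exponential convergence from the spectral decomposition.

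\textbf{Step 1: $\bm{\Omega}\succeq 0$.} Each $\bm{\Omega}_c$ is designed through $\mathcal{P}_2$. Its constraints (\ref{equ: PSD main}) and (\ref{equ: null main}), equivalent to (\ref{equ: cstr min eig}) and (\ref{equ: null cstr mat}), make $\bm{\Omega}_c$ PSD: it vanishes on the row space of $\bar{\bm{P}}_c$ and is positive definite on the orthogonal complement, which is spanned by $\bm{Q}_c$. Zero padding preserves this. Writing $\tilde{\bm{\Omega}}_c=\bm{T}_c\bm{\Omega}_c\bm{T}_c\tp$ with a selection matrix $\bm{T}_c\in\{0,1\}^{N\times N_c}$, we get $\bm{x}\tp\tilde{\bm{\Omega}}_c\bm{x}=(\bm{T}_c\tp\bm{x})\tp\bm{\Omega}_c(\bm{T}_c\tp\bm{x})\geq 0$. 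Summing over $c$ gives $\bm{\Omega}\succeq 0$.

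\textbf{Step 2: $\bm{p}$ is an equilibrium.} The subconfigurations are restrictions of the nominal $\bm{P}$, so $\bm{P}_c\tp=\bm{T}_c\tp\bm{P}\tp$. Hence $\tilde{\bm{\Omega}}_c\bm{P}\tp=\bm{T}_c\bm{\Omega}_c\bm{P}_c\tp=\bm{0}$, and summing gives $\bm{\Omega}\bm{P}\tp=\bm{0}$. Equivalently $(\bm{\Omega}\otimes\bm{I}_D)\bm{p}=\bm{0}$. The same computation applied to $\bm{1}_N$ shows that $\ker(\bm{\Omega})\supseteq\operatorname{span}(\bar{\bm{P}}\tp)$. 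Therefore the whole affine image of $\bm{p}$ lies in the equilibrium set.

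\textbf{Step 3: exponential convergence.} Write $\bm{\Omega}\otimes\bm{I}_D=\bm{U}\bm{\Lambda}\bm{U}\tp$ with $\bm{\Lambda}\succeq 0$. Then
\begin{equation*}
\mathbb{E}[\bm{z}(t)]=\bm{U}e^{-\bm{\Lambda}t}\bm{U}\tp\mathbb{E}[\bm{z}(0)].
\end{equation*}
The component of the state in $\ker(\bm{\Omega}\otimes\bm{I}_D)$ stays constant. The orthogonal component decays at least as fast as $e^{-\lambda_{\min}^{+}t}$, where $\lambda_{\min}^{+}$ is the smallest nonzero eigenvalue. So $\mathbb{E}[\bm{z}(t)]$ converges exponentially, from any initialization, to the orthogonal projection of $\mathbb{E}[\bm{z}(0)]$ onto $\ker(\bm{\Omega})\otimes\mathbb{R}^D$, and this subspace contains $\bm{p}$ (and its affine transformations).

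\textbf{Expected obstacle.} The statement only asks for convergence to \emph{a space containing} $\bm{p}$, so no rank condition is needed. The only subtle point is that $\ker(\bm{\Omega})$ may be strictly larger than $\operatorname{span}(\bar{\bm{P}}\tp)$ when the bridging nodes are insufficient. I would note explicitly that this does not affect the claim, and defer exact characterization of the kernel (the collective versus partially independent motions) to the subsequent conditions on bridging nodes. I would also remark that the spectral argument relies on the symmetry of $\bm{\Omega}$, which holds because each $\tilde{\bm{\Omega}}_c$ is symmetric. This is why the analysis is carried out on the mean dynamics rather than on the asymmetric instantaneous LTV matrix.
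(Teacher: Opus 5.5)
Your proposal is correct and follows the same route as the paper: zero-padding gives $(\bm{\Omega}\otimes\bm{I}_D)\bm{p}=\bm{0}$, the ensemble $\bm{\Omega}=\sum_{c}\tilde{\bm{\Omega}}_c$ is PSD as a sum of PSD padded blocks, and exponential convergence of the symmetric LTI mean dynamics follows. You additionally make explicit what the paper only asserts: why each $\bm{\Omega}_c$ is PSD under the $\mathcal{P}_2$ constraints, the congruence $\tilde{\bm{\Omega}}_c=\bm{T}_c\bm{\Omega}_c\bm{T}_c^\top$, and the spectral solution that identifies the decay rate $\lambda_{\min}^{+}$ and the limit as the projection of the initial state onto $\ker(\bm{\Omega})\otimes\mathbb{R}^D$.
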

\noindent\begin{proof}
    We first show that the nominal configuration $\bm{p}$ is still an equilibrium for the equivalent ensemble stress matrix $\bm{\Omega}$. The stress matrices $\bm{\Omega}_c$ for cluster $c=1,...,C$ satisfy $(\bm{\Omega}_c\otimes \bm{I}_D)\vec(\bm{P}_c)=\bm{0}$. It also satisfies that $(\tilde{\bm{\Omega}}_c\otimes \bm{I}_D)\bm{p}=\bm{0}$ due to the zero-padding. Then the ensemble $\bm{\Omega}$, the sum of $\tilde{\bm{\Omega}}_c$ over $c=1,...,C$ satisfies $(\bm{\Omega}\otimes \bm{I}_D)\bm{p}=\bm{0}$. 
    
    Now $\tilde{\bm{\Omega}}$ is also a positive-semidefinite (PSD) matrix since it is the sum of $C$ PSD matrices $\tilde{\bm{\Omega}}_c$. As a result, the dynamics (\ref{equ: global mean dynamics}) exhibit exponential convergence. 
\end{proof}

\noindent Theorem \ref{thm: stability of mean dynamics} establishes the stability of the mean dynamics, meaning a non-increasing mean error. We further give insights into the instantaneous dynamics. 
\begin{remarkx}[\textit{Convergence of instantaneous dynamics}]\label{rmk: stability of ins dynamics}
    Let $\sigma(t)$ denote an arbitrary switching pattern representing the agents' random cluster selections at time $t$, and let $\bm{H}_{\sigma(t)}\in\mathbb{R}^{N\times N}$ be the corresponding instantaneous global system matrix, which is a row-wise assembly of $\tilde{\bm{\Omega}}_c$. The continuous-time system dynamics are 
\begin{equation}\label{equ: global ins dynamics}
    \dot{\bm{z}}(t) = -\qty(\bm{H}_{\sigma(t)}\otimes\bm{I}_D)\bm{z}(t).
\end{equation}
Note that $\bm{H}_\sigma(t)$ is not necessarily a PSD matrix to guarantee a nonincreasing error energy using a common Lyapunov function $V(\bm{z}) = \frac{1}{2}\norm{\bm{z}(t)-\bm{z}^*}_2^2$ where $\bm{z}^*\in\mathcal{A}(P)$ is a target configuration in the affine image $\mathcal{A}(P)$. This means that at certain time instants, the agents may temporarily pull each other in conflicting directions, thereby increasing the overall error energy, although this has not been experimentally shown to be an issue.
\end{remarkx}

Note that Theorem \ref{thm: stability of mean dynamics} only establishes the convergence into the space containing the given configuration $\bm{p}$. However, whether this space is only spanned by the affine transformation of $\bm{p}$ needs further investigation, as it determines whether the clusters behave as one affine maneuvering unity.

\subsection{Conditions for Collective Motions}
In this section, we give some insights into the conditions on the bridging nodes for different clusters to adopt collective affine motions, i.e., local affine transformations of each cluster are identical and are thus global. A direct criterion is to evaluate if the rank of the ensemble stress matrix $\bm{\Omega}$ in (\ref{equ: global mean dynamics}) is equal to $N-D-1$. A lower rank implies the introduction of additional flexes beyond affine motions. However, showing the rank of the $\bm{\Omega}$ illustrated in Fig. \ref{fig: illus mean dynamics} is not straightforward. Therefore, we show an alternative view of the problem.

\begin{theoremx}[\textit{Conditions on the bridging nodes for collective motions}]\label{thm: collective motion}
    Suppose node $i\in\mathcal{V}_c$ admits 
    \begin{equation}
        \bm{z}_i = \bm{\Theta}_c\bm{p}_i + \bm{t}_c = \mqty[\bm{\Theta}_c & \bm{t}_c]\mqty[\bm{p}_i\\1],
    \end{equation}
    where $\bm{\bm{\Theta}}_c\in\mathbb{R}^{N_c\times N_c}$ and $\bm{t}_c\in\mathbb{R}^{N_c}$ describe the local affine transformations of cluster $c$. Define the set of bridging nodes $\mathcal{B}_{cc'} = \mathcal{V}_c\cap\mathcal{V}_{c'}$ for any two connected clusters $c,c'=1,...,C$. Then local affine transformations are identical, i.e., $\bm{\Theta}_c = \bm{\Theta}_{c'}$ and $\bm{t}_{c} = \bm{t}_{c'}$, if the augmented configuration of the bridging nodes $\bar{\bm{P}}({\mathcal{B}_{cc'}})\in\mathbb{R}^{(D+1)\times\qty|\mathcal{B}_{cc'}|}$ is full row-rank.
   
\end{theoremx}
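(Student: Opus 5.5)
Each bridging node $i\in\mathcal{B}_{cc'}$ belongs to both $\mathcal{V}_c$ and $\mathcal{V}_{c'}$, so its single physical position $\bm{z}_i$ must satisfy both cluster-wise affine descriptions at once. Writing $\bm{A}_c=\mqty[\bm{\Theta}_c & \bm{t}_c]$ and $\bar{\bm{p}}_i=\mqty[\bm{p}_i\tp & 1]\tp$, this gives
\begin{equation*}
\bm{A}_c\bar{\bm{p}}_i=\bm{z}_i=\bm{A}_{c'}\bar{\bm{p}}_i,\qquad \forall i\in\mathcal{B}_{cc'}.
\end{equation*}
The argument is therefore linear algebra on the difference $\bm{\Delta}=\bm{A}_c-\bm{A}_{c'}$. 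Although the statement writes the dimensions of $\bm{\Theta}_c$ as $N_c\times N_c$, only the product with $\bm{p}_i\in\mathbb{R}^D$ enters, so I will treat $\bm{\Theta}_c$ as $D\times D$ and $\bm{t}_c$ as $D$-dimensional.

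\textbf{Step 1.} Stack the bridging-node equations column-wise. This gives $\bm{\Delta}\,\bar{\bm{P}}(\mathcal{B}_{cc'})=\bm{0}_{D\times|\mathcal{B}_{cc'}|}$, i.e., every row of $\bm{\Delta}$ lies in the left nullspace of $\bar{\bm{P}}(\mathcal{B}_{cc'})$.

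\textbf{Step 2.} Invoke the hypothesis that $\bar{\bm{P}}(\mathcal{B}_{cc'})\in\mathbb{R}^{(D+1)\times|\mathcal{B}_{cc'}|}$ has full row rank $D+1$. Then its left nullspace is trivial. Equivalently, $\bar{\bm{P}}(\mathcal{B}_{cc'})\bar{\bm{P}}(\mathcal{B}_{cc'})\tp$ is invertible, so right-multiplying by $\bar{\bm{P}}(\mathcal{B}_{cc'})\tp\big(\bar{\bm{P}}(\mathcal{B}_{cc'})\bar{\bm{P}}(\mathcal{B}_{cc'})\tp\big)^{-1}$ gives $\bm{\Delta}=\bm{0}$. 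Hence $\bm{\Theta}_c=\bm{\Theta}_{c'}$ and $\bm{t}_c=\bm{t}_{c'}$.

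\textbf{Step 3.} Interpret the result geometrically and extend it across the cluster graph. Full row rank requires $|\mathcal{B}_{cc'}|\ge D+1$ bridging nodes that are affinely independent, in the sense that they do not lie on a common hyperplane. For the global claim of a single collective affine motion, I will define the cluster graph with an edge between $c$ and $c'$ whenever $\mathcal{B}_{cc'}$ satisfies the rank condition. If this graph is connected, applying Steps 1--2 along a path between any two clusters shows, by transitivity, that all $\bm{A}_c$ coincide.

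\textbf{Main obstacle.} The delicate point is the premise itself: that each cluster's steady state is an affine image of its own subconfiguration. This should come from Theorem~\ref{thm: rigidity and stress} together with the stabilizable rank $N_c-D-1$ of each $\bm{\Omega}_c$. At an equilibrium of the ensemble one has $\bm{z}\tp(\bm{\Omega}\otimes\bm{I}_D)\bm{z}=0$. Since $\bm{\Omega}=\sum_c\tilde{\bm{\Omega}}_c$ is a sum of PSD terms, each quadratic form $\bm{z}\tp(\tilde{\bm{\Omega}}_c\otimes\bm{I}_D)\bm{z}$ must vanish, which places the restriction $\bm{z}|_{\mathcal{V}_c}$ in $\ker(\bm{\Omega}_c\otimes\bm{I}_D)$. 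That kernel is exactly the set of affine images of $\bm{P}_c$. The theorem takes this premise as given, but I would include this remark so that the conclusion ties back to the rank question for the ensemble: under the connectivity condition, $\ker\bm{\Omega}$ is spanned by affine images of $\bm{P}$, so $\rank(\bm{\Omega})=N-D-1$.
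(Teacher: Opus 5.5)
Your Steps 1--2 are exactly the paper's proof: it also forms $\bm{\Delta}_{cc'}=[\bm{\Theta}_c\ \bm{t}_c]-[\bm{\Theta}_{c'}\ \bm{t}_{c'}]$, stacks the bridging-node equations into $\bm{\Delta}_{cc'}\bar{\bm{P}}(\mathcal{B}_{cc'})=\bm{0}$, and concludes $\bm{\Delta}_{cc'}=\bm{0}$ from full row rank, so your proposal is correct and follows the same route. Your extra material is also correct and goes beyond what the paper spells out: the $D\times D$ dimension fix, the transitivity argument over the cluster graph, and the PSD-sum argument that each cluster's equilibrium is an affine image of $\bm{P}_c$, giving $\operatorname{rank}(\bm{\Omega})=N-D-1$.
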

\noindent\begin{proof}
    To have identical local transformations requires
\begin{equation}
    \bm{z}_i = \mqty[\bm{\Theta}_c & \bm{t}_c]\mqty[\bm{p}_i\\1] = \mqty[\bm{\Theta}_{c'} & \bm{t}_{c'}]\mqty[\bm{p}_i\\1], \ \forall i\in\mathcal{B}_{cc'}, 
\end{equation}
which yields
\begin{equation}\label{equ: common motion}
    \qty(\mqty[\bm{\Theta}_c & \bm{t}_c] - \mqty[\bm{\Theta}_{c'} & \bm{t}_{c'}])\mqty[\bm{p}_i\\1] = \bm{0}_D,\ \forall i\in\mathcal{B}_{cc'}.
\end{equation}
Stacking for all $i\in\mathcal{B}_{cc'}$ gives $\bm{\Delta}_{cc'}\bar{\bm{P}}({\mathcal{B}_{cc'}}) = \bm{0}_{(D+1)\times\qty|\mathcal{B}_{cc'}|}$, where $\bm{\Delta}_{cc'} = \qty(\mqty[\bm{\Theta}_c & \bm{t}_c] - \mqty[\bm{\Theta}_{c'} & \bm{t}_{c'}])$. Then if $\bar{\bm{P}}({\mathcal{B}_{cc'}})$ is full row-rank, we get the trivial solution $\bm{\Delta}_{cc'}=\bm{0}$.
\end{proof}

\begin{figure}[t]
    \centering%
    \includegraphics[width=0.88\linewidth]{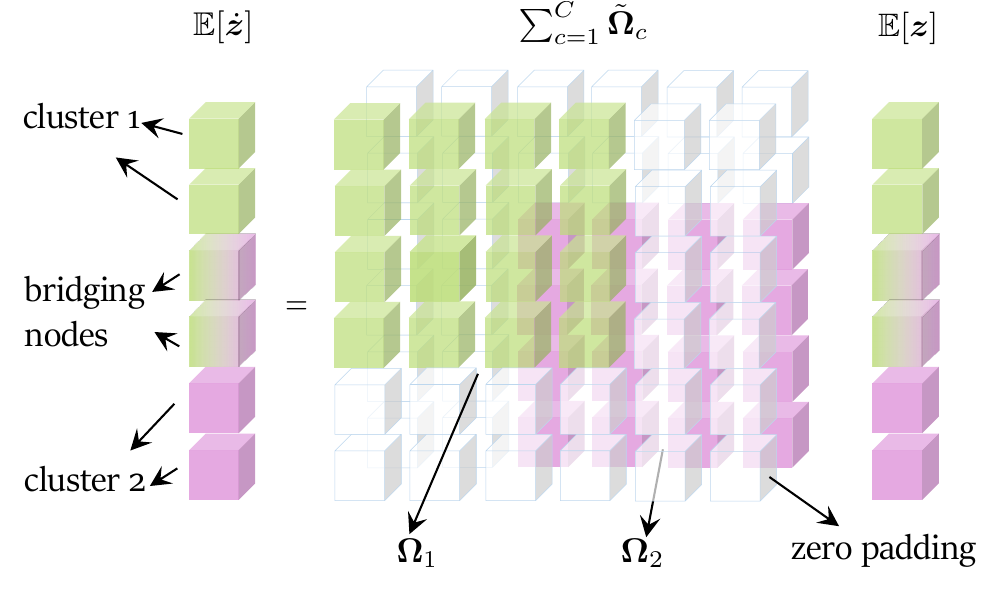}
    \caption{An illustration example of the mean dynamics model (\ref{equ: global mean dynamics}) with $6$ nodes over 2 clusters. The dimension is set to $D=1$ for simplicity.}%
    \label{fig: illus mean dynamics}%
\end{figure}

\noindent Geometrically, affine motions for two clusters $c$ and $c'$ are always aligned if there are at least $D+1$ bridging nodes that fully span $D$-dimensional space for each cluster. If leaders are implemented to guide affine maneuvers, they can be added as required by \cite{zhao2018affine} with no additional restrictions on the distribution to the clusters. 

\subsection{Convergence Speed of the Ensemble Graph}
Since the ensemble graph works as a single cluster under conditions given in Theorem \ref{thm: collective motion}, it is natural to investigate which factors contribute to the convergence speed, i.e., $\lambda_{D+2}(\bm{\Omega})$, given that each cluster is locally optimized for large $\lambda_{D+2}(\bm{\Omega}_c)$.

\begin{theoremx}[\textit{Ensemble convergence speed $\lambda_{D+2}(\bm{\Omega})$}]\label{thm: lambda ensemble}
    If the conditions in Theorem \ref{thm: collective motion} are met for collective motions, the convergence-indicating eigenvalue of the ensemble stress matrix, i.e., $\lambda_{D+2}(\bm{\Omega})$ for $\bm{\Omega}$ in (\ref{equ: global mean dynamics}), is upper bounded by 
    \begin{equation}\label{equ: eigenvalue bound elaborate}
    \lambda_{D+2}(\bm{\Omega}) \le \min_{c \in \{1,...,C\}} \qty( \lambda_{D+2}(\bm{\Omega}_c) + \rho_c).
\end{equation}
Here, defining $\bm{v}_c\in\mathbb{R}^{N_c}$ as the eigenvector associated with $\lambda_{D+2}(\bm{\Omega}_c)$, $\rho_c=\beta\sum_{k\neq c}^C\norm{\bm{v}_{c|\mathcal{B}_{ck}}}_2^2$ with $\beta$ in the spectral norm constraint in $\mathcal{P}_2$, and $\bm{v}_{c|\mathcal{B}_{ck}}$ the subvector of $\bm{v}_c$ corresponding to the bridging nodes to cluster $k$.
\end{theoremx}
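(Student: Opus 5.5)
The plan is to apply the Courant--Fischer min--max characterization to $\bm{\Omega}=\sum_{c}\tilde{\bm{\Omega}}_c$ with a test subspace of dimension $D+2$. This subspace is built from the known kernel of $\bm{\Omega}$ together with the zero-padded local eigenvector of one cluster. Concretely, I would use
\begin{equation*}
\lambda_{D+2}(\bm{\Omega})=\min_{\dim\mathcal{U}=D+2}\ \max_{\bm{x}\in\mathcal{U}\setminus\{\bm{0}\}}\frac{\bm{x}\tp\bm{\Omega}\bm{x}}{\bm{x}\tp\bm{x}} .
\end{equation*}
Fix a cluster $c$. Let $\tilde{\bm{v}}_c\in\mathbb{R}^N$ be $\bm{v}_c$ (normalized so that $\norm{\bm{v}_c}_2=1$) padded with zeros outside $\mathcal{V}_c$. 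Take $\mathcal{U}_c=\operatorname{span}\{\bar{\bm{P}}\tp\}\oplus\operatorname{span}\{\tilde{\bm{v}}_c\}$.

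The key steps are as follows.

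First, by the proof of Theorem~\ref{thm: stability of mean dynamics}, $\bm{\Omega}\bar{\bm{P}}\tp=\bm{0}$. The columns of $\bar{\bm{P}}\tp$ are $D+1$ independent vectors (full-rank $\bar{\bm{P}}$), and under the conditions of Theorem~\ref{thm: collective motion} they span exactly the kernel. Hence $\lambda_{D+2}$ is the smallest nonzero eigenvalue and is the relevant convergence rate.

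Second, $\tilde{\bm{v}}_c$ is orthogonal to this kernel. Indeed, $\bar{\bm{P}}\tilde{\bm{v}}_c=\bar{\bm{P}}_c\bm{v}_c=\bm{0}$, because $\bm{v}_c$ is an eigenvector of the symmetric $\bm{\Omega}_c$ for a nonzero eigenvalue and is therefore orthogonal to $\ker\bm{\Omega}_c\supseteq\operatorname{range}(\bar{\bm{P}}_c\tp)$. So $\dim\mathcal{U}_c=D+2$.

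Third, write any $\bm{x}\in\mathcal{U}_c$ as $\bm{x}=\bm{n}+a\tilde{\bm{v}}_c$ with $\bm{n}\in\operatorname{range}(\bar{\bm{P}}\tp)$. Then $\bm{x}\tp\bm{\Omega}\bm{x}=a^2\tilde{\bm{v}}_c\tp\bm{\Omega}\tilde{\bm{v}}_c$, while by orthogonality $\bm{x}\tp\bm{x}=\norm{\bm{n}}^2+a^2\ge a^2$. This gives $\lambda_{D+2}(\bm{\Omega})\le\tilde{\bm{v}}_c\tp\bm{\Omega}\tilde{\bm{v}}_c$.

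The remaining step is to expand the quadratic form cluster by cluster:
\begin{equation*}
\tilde{\bm{v}}_c\tp\bm{\Omega}\tilde{\bm{v}}_c=\bm{v}_c\tp\bm{\Omega}_c\bm{v}_c+\sum_{k\neq c}\tilde{\bm{v}}_c\tp\tilde{\bm{\Omega}}_k\tilde{\bm{v}}_c .
\end{equation*}
The first term equals $\lambda_{D+2}(\bm{\Omega}_c)$. For $k\neq c$, the support of $\tilde{\bm{v}}_c$ meets $\mathcal{V}_k$ only on $\mathcal{B}_{ck}$. Hence $\tilde{\bm{v}}_c\tp\tilde{\bm{\Omega}}_k\tilde{\bm{v}}_c=\bm{w}\tp\bm{\Omega}_k\bm{w}$, where $\bm{w}\in\mathbb{R}^{N_k}$ equals $\bm{v}_{c|\mathcal{B}_{ck}}$ on the bridging entries and zero elsewhere. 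The spectral-norm constraint of $\mathcal{P}_2$ then gives $\bm{w}\tp\bm{\Omega}_k\bm{w}\le\beta\norm{\bm{v}_{c|\mathcal{B}_{ck}}}_2^2$. Non-adjacent clusters contribute zero, consistent with an empty $\mathcal{B}_{ck}$. Summing over $k\neq c$ yields $\lambda_{D+2}(\bm{\Omega})\le\lambda_{D+2}(\bm{\Omega}_c)+\rho_c$. Since $c$ was arbitrary, taking the minimum over $c$ gives (\ref{equ: eigenvalue bound elaborate}).

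The main obstacle I expect is the second step: verifying that the padded local eigenvector is genuinely orthogonal to the global kernel. This requires that $\ker\bm{\Omega}_c$ contain $\operatorname{range}(\bar{\bm{P}}_c\tp)$ and that $\lambda_{D+2}(\bm{\Omega}_c)>0$, which are guaranteed by the constraints (\ref{equ: PSD main}) and (\ref{equ: null main}) of the local design. A second point needing care is normalizing $\bm{v}_c$ to unit length, which is implicit in the definition of $\rho_c$. The collective-motion hypothesis is used only to ensure the kernel is exactly $(D+1)$-dimensional, so that $\lambda_{D+2}$ is the convergence-indicating eigenvalue; the inequality itself holds regardless.
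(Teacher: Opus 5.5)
Your proposal is correct and follows the paper's proof. It uses the same zero-padded test vector $\tilde{\bm{v}}_c$, the same cluster-wise split of $\tilde{\bm{v}}_c\tp\bm{\Omega}\tilde{\bm{v}}_c$ into $\lambda_{D+2}(\bm{\Omega}_c)$ plus bridging terms bounded by $\beta\norm{\bm{v}_{c|\mathcal{B}_{ck}}}_2^2$, and the same minimum over $c$.

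Your differences are minor refinements. Using Courant--Fischer with the test subspace $\operatorname{span}\{\bar{\bm{P}}\tp\}\oplus\operatorname{span}\{\tilde{\bm{v}}_c\}$, rather than minimizing over $\mathrm{Null}(\bm{\Omega})^\perp$ as the paper does via Lemma~\ref{lmm: perp vc nullspace}, correctly makes the inequality independent of the rank-$(N-D-1)$ hypothesis. Padding $\bm{v}_{c|\mathcal{B}_{ck}}$ into $\mathbb{R}^{N_k}$ replaces the paper's appeal to Cauchy interlacing with a direct Rayleigh-quotient bound.
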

\noindent\begin{proof}
    The smallest non-zero eigenvalue can be defined as
\begin{equation}\label{equ: eigenvalue bound}
    \lambda_{D+2}(\bm{\Omega}) = \min_{\substack{\bm{v}\perp\mathrm{Null}(\bm{\Omega}) \\ \norm{\bm{v}}=1}}\bm{v}\tp \bm{\Omega} \bm{v} \leq \frac{\bm{v}\tp \bm{\Omega} \bm{v}}{\bm{v}\tp\bm{v}}
\end{equation}
for any candidate vector $\bm{v}\in\mathbb{R}^{N}$, with $\bm{v}\perp\mathrm{Null}(\bm{\Omega})$. Consider the candidate vector $\tilde{\bm{v}}_c=\mqty[\bm{0}\tp & \bm{v}_c\tp & \bm{0}\tp]\tp\in\mathbb{R}^N$, which is the eigenvector $\bm{v}_c\in\mathbb{R}^{N_c}$ associated with $\lambda_{D+2}(\bm{\Omega}_c)$, correctly zero-padded to length $N$. Lemma \ref{lmm: perp vc nullspace} in Appendix \ref{sec: properties} shows that $\tilde{\bm{v}}_c\perp\mathrm{Null}(\bm{\Omega})$ and $\norm{\tilde{\bm{v}}_c}=1$, and thus $\tilde{\bm{v}}_c$ satisfies the bound (\ref{equ: eigenvalue bound}), i.e., $\lambda_{D+2}(\bm{\Omega})\leq \tilde{\bm{v}}_c\tp\bm{\Omega}\tilde{\bm{v}}_c$. 
Further extending this bound, we obtain
\begin{subequations}\label{equ: lmd inequality}
\begin{align}
     \lambda_{D+2}(\bm{\Omega})\leq\tilde{\bm{v}}_c\tp \bm{\Omega} \tilde{\bm{v}}_c &= \tilde{\bm{v}}_c\tp \qty(\tilde{\bm{\Omega}}_c + \sum_{k\neq c}^{C}\tilde{\bm{\Omega}}_k) \tilde{\bm{v}}_c \\ &=  \tilde{\bm{v}}_c\tp \tilde{\bm{\Omega}}_c \tilde{\bm{v}}_c +  \tilde{\bm{v}}_c\tp  (\sum_{k\neq c}^C\tilde{\bm{\Omega}}_k) \tilde{\bm{v}}_c,
\end{align}
\end{subequations}
with
\begin{subequations}
    \begin{align}
        \tilde{\bm{v}}_c\tp \tilde{\bm{\Omega}}_c \tilde{\bm{v}}_c &= \bm{v}_c\tp \bm{\Omega}_c \bm{v}_c = \lambda_{D+2}(\bm{\Omega}_c) \\
        \tilde{\bm{v}}_c\tp (\sum_{k\neq c}^C\tilde{\bm{\Omega}}_k)\tilde{\bm{v}}_c &= \sum_{k\neq c}^C\bm{v}_{c|\mathcal{B}_{ck}}\tp\bm{\Omega}_{k|{\mathcal{B}_{ck}}}\bm{v}_{c|\mathcal{B}_{ck}}\label{equ: ensemble lambda part 2},
    \end{align}
\end{subequations}
where $\bm{v}_{c|\mathcal{B}_{ck}}$ denotes the sub-vector of $\bm{v}_c$ corresponding to the bridging nodes $\mathcal{B}_{ck}$ shared between cluster $c$ and $k$, and $\bm{\Omega}_{k|{\mathcal{B}_{ck}}}$ is the subblock of $\bm{\Omega}_k$ corresponding to those bridging nodes. We could further relax (\ref{equ: ensemble lambda part 2}) using
\begin{subequations}
\begin{align}
\bm{v}_{c|\mathcal{B}_{ck}}\tp\bm{\Omega}_{k|{\mathcal{B}_{ck}}}\bm{v}_{c|\mathcal{B}_{ck}} &\leq \lambda_{\mathrm{max}}\qty(\bm{\Omega}_{k|{\mathcal{B}_{ck}}})\norm{\bm{v}_{c|\mathcal{B}_{ck}}}_2^2\\
&\leq\lambda_{\mathrm{max}}\qty(\bm{\Omega}_k)\norm{\bm{v}_{c|\mathcal{B}_{ck}}}_2^2\label{equ: cauchy interlacing relax}\\
& \leq \beta \norm{\bm{v}_{c|\mathcal{B}_{ck}}}_2^2,\label{equ: beta relax}
\end{align}
\end{subequations}
where (\ref{equ: cauchy interlacing relax}) is the relaxation by the Cauchy interlacing theorem \cite{hwang2004cauchy} which upper bounds the maximum eigenvalue of a principal submatrix by that of the parent matrix, and (\ref{equ: beta relax}) is spectral norm constraint of $\mathcal{P}_2$ in the stress design. Substituting (\ref{equ: beta relax}) back to (\ref{equ: ensemble lambda part 2}) gives us $\rho_c=\beta\sum_{k\neq c}^C\norm{\bm{v}_{c|\mathcal{B}_{ck}}}_2^2$. Finally, since this inequality (\ref{equ: lmd inequality}) is satisfied for all $c=1,...,C$, we adopt the tightest one, i.e., bound (\ref{equ: eigenvalue bound elaborate}).
\end{proof}

With the upper bound (\ref{equ: eigenvalue bound elaborate}) we can infer that the convergence speed of the ensemble graph is bottlenecked by two factors: the slowest cluster indicated by $\lambda_{D+2}(\bm{\Omega}_c)$, and the coupling strength of the bridging nodes indicated by $\rho_c$. This provides some guidelines for the design of the clusters, i.e., to make sure each cluster converges fast using the proposed algorithm $\mathcal{P}_2$, and to ensure the bridging nodes connect the clusters well. The design of the optimal bridging remains an open problem; however, heuristics such as using more bridging nodes for a fixed $\beta$ are proven to be effective in practice.

\begin{figure}[t]
    \centering
    \subfloat[\scriptsize Bridging nodes span 2D]{\raisebox{6mm}{ 
        \includegraphics[width=0.154\textwidth]{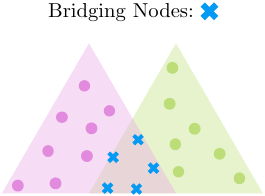}
    }}
    \hspace{0pt} 
    \subfloat[\scriptsize Rank-deficient bridge]{\raisebox{0mm}{
        \includegraphics[width=0.126\textwidth]{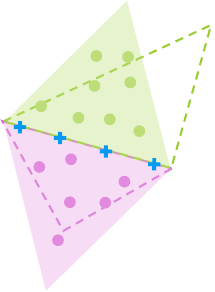}
    }}
    \hspace{0pt} 
    \subfloat[\scriptsize Rank-deficient bridge]{\raisebox{0mm}{
        \includegraphics[width=0.154\textwidth]{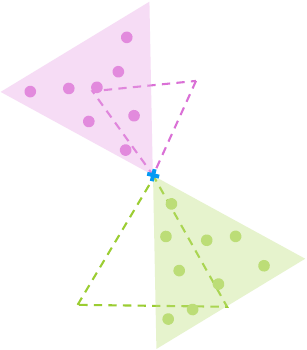}
    }}
    
    \caption{An illustration of local flexibility introduced by rank-deficient bridging nodes. The shaded area outlines the desired configuration and the dashed geometry shows the potential ambiguity.}
    \label{fig: partial-collective motions}
\end{figure}

\subsection{Extension to Partial-collective Behaviors}
Following the conclusion of the previous subsection, if the bridging nodes are not selected such that they span $D$-dimensional space, the clusters may not be aligned in motion. Instead, the swarm can be intentionally designed to exhibit partial-collective motions, allowing for cluster-wise flexibility. In the most extreme case, if no bridging nodes are present, the clusters act as completely independent systems. 

\begin{table*}[t]
\scriptsize
\centering
    \begin{threeparttable}
	\caption{Benchmark Results on Runtime (in [Seconds]) }
	\label{tab: benchmark results}
	\centering
	\begin{tabular*}{\linewidth}{@{\extracolsep{\fill}} c c c c c c c c c c}
		\toprule
       test case & Random-5 & Random-8 & Random-50 & Random-100 & Random-500 & Circular-10 & Cuboctahedron & Trunc. Icosa. & Letter-``W" \\
       $N$ & 5 & 8 & 50 & 100 & 500 & 10 & 12 & 60  & 260\\
       $D$ & 2 & 2 & 2 & 2 & 2 & 2 & 3 & 3 & 3\\
		\midrule
		proposed ($\mathcal{P}_2$) & 6.7e-3 & 10.3e-3 & 4.8 & N/A & N/A & 13.8e-3 & 16.3e-3  & 7.7 & N/A \\
        proposed-usi ($\mathcal{P}_3$) & N/A & N/A & N/A & N/A & N/A & 8.9e-3 & 8.6e-3 & 0.27  & N/A \\
        proposed-m.c.\tnote{1} & -- & -- & -- & 17.6 & 88.5 & -- & -- & --  & 18.5\tnote{2} \\
        \\
        Lin et al. & 4.1e-3 & 7.1e-3 & N/A &  N/A & N/A & 9.5e-3 & N/A & N/A  & N/A\\
        Yang et al. & 0.080e-3 & 0.10e-3 & 0.55e-3 & 0.97e-3 & 4.8e-3 & 0.2e-3 & 0.1e-3 & 0.6e-3  & 2.5e-3\\
        Xiao et al. & 519.0e-3 & 108.3 & N/A & N/A & N/A & 322.7  & N/A & N/A & N/A \\
        Li et al. & 0.307e-3 & 0.56e-3 & 2.7e-3 & 5.3e-3 & 31.1e-3 & 0.65e-3 & 0.7e-3 & 3.7e-3 & 15.2e-3\\
        
		\bottomrule
	\end{tabular*}
    \begin{tablenotes}
            \scriptsize
            \item[] ``N/A": Not applicable due to memory constraints, impracticably long runtime, or invalid results. ``--": Multicluster operations not needed.
            \item[1] The proposed-m.c. is configured with 2 clusters of 60 nodes for ``Random-100", 10 clusters of 60 nodes for ``Random-500", and 4 clusters of 58 nodes for ``Letter-W".
            \item[2] The computation is only needed for 1 cluster, as all clusters are identical up to affine transformations.
        \end{tablenotes}
    \end{threeparttable}
\end{table*}

To maneuver an affine formation, the global ensemble fundamentally requires a base set of at least $D+1$ affinely independent leaders to span the space \cite{zhao2018affine}. However, when the swarm features rank-deficient bridges, this global leader set alone is insufficient to uniquely determine the configuration of every sub-cluster. The bridging nodes $\mathcal{V}_{\mathcal{B}_c}$ within cluster $c$ act as virtual leaders, providing constraints dictated by the neighboring cluster. To guarantee that cluster $c$ is affinely localizable, meaning its local shape and affine transformations are fully determined and controllable, the actual leaders within the cluster must compensate for the bridge's rank deficiency. They must jointly satisfy the local condition
\begin{equation}\label{equ: leader condition}
\rank\qty(\bar{\bm{P}}\qty({\qty(\mathcal{V}_l\cap\mathcal{V}_c)\cup\mathcal{V}_{\mathcal{B}_c}})) = D+1,\ \ \forall c=1,...,C
\end{equation}
where $\mathcal{V}_l$ is the global subset of chosen leaders, and $\mathcal{V}_{\mathcal{B}_c}$ is the set of bridging nodes attached to cluster $c$.
Physically, (\ref{equ: leader condition}) dictates how to control the unconstrained degrees of freedom with leaders introduced by a rank-deficient bridge. Several 2D examples are visualized in Fig. \ref{fig: partial-collective motions}, where two clusters with different bridging node setups are considered. In Fig. \ref{fig: partial-collective motions} (a), the bridging nodes satisfy conditions for collective behaviors, thus no additional leaders are required beyond the $3$ for global maneuver. In (b), the bridging nodes form a line, introducing loose local degrees of freedom, as can be seen by the dashed outline. In this case, at least $1$ leader should be allocated to each cluster. Similarly, (c) has 1 bridging node and requires $2$ leaders for each cluster to ensure global localizability.

\begin{figure}[t]
	\centering	

        \subfloat[\scriptsize Circular-10 in 2D]{\raisebox{0ex}
		{\includegraphics[width=0.15\textwidth]{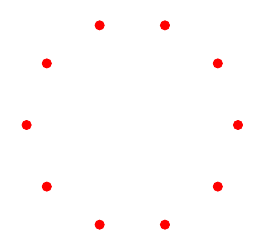}}%
	}
        \hspace{9mm}
	\subfloat[\scriptsize Cuboctahedron in 3D]{\raisebox{0ex}
		{\includegraphics[width=0.15\textwidth]{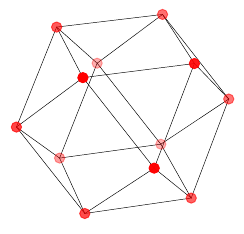}}%
	}\\
    \hspace{0mm}
        \subfloat[\scriptsize 
        Truncated-Icosahedron in 3D]{\raisebox{0ex}
    		{\includegraphics[width=0.19\textwidth]{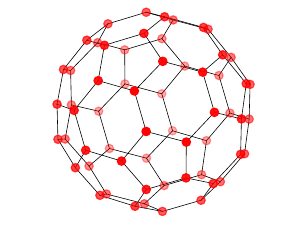}}%
    	}
        \hspace{3mm}
        \subfloat[\scriptsize Letter-``W" in 3D]{\raisebox{0ex}
    		{\includegraphics[width=0.19\textwidth]{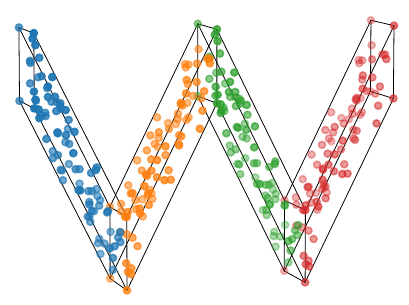}}%
    	}

	\caption{The special configurations used for the benchmark. For 3D configurations, the edges only outline the geometry for clear visualization.}
	\label{fig: benchmark configs}
		
\end{figure}



\section{Simulations}\label{sec: simulations}
In this section, we numerically validate the proposed algorithms and the multicluster control framework. We first perform a benchmark comparing with multiple state-of-the-art stress matrix design algorithms on several performance metrics. We then demonstrate the convergence and the inter-cluster flexibility of our proposed multicluster control. All code and plots are publicly available\footnote{\url{https://github.com/asil-lab/zli-large-afc}}.

\subsection{Benchmarks of Stress Design}
The algorithms are benchmarked on a Linux platform with an AMD Ryzen 9 7900X processor and 64GB of DRAM. We select some representative state-of-the-art candidates from all existing methods, including Lin et al. \cite{lin2015necessary}, Xiao et al. \cite{xiao2022framework}, Yang et al. \cite{yang2018constructing}, and Li et al. \cite{li2025distributed}. The proposed solutions include the generic stress design method $\mathcal{P}_2$ from Section \ref{sec: methodology} with 3 sets of hyperparameters (namely $\gamma=0.1$, $\beta=1$, and $\alpha\in\qty{0.5, 1.5, 5}$), the unique stress identifier (USI) method $\mathcal{P}_3$ from Section \ref{sec: usi} for reduced complexity, and the multicluster division from Section \ref{sec: mc control}. The optimization-based algorithms are solved using MOSEK \cite{aps2020mosek} with the CVXPY interface \cite{diamond2016cvxpy}, and all algorithms perform 100 Monte Carlo runs for each case. We test our solutions using random configurations of different sizes, namely $N \in \qty{5, 8, 50, 100, 500}$ with a naming convention such as ``Random-5", ``Random-8", etc. Some rotationally symmetric configurations are considered as well as shown in Fig. \ref{fig: benchmark configs}, including ``Circular-10", which is a $10$-node regular decagon, a $12$-node ``Cuboctahedron", and a $60$-node ``Truncated Icosahedron". Lastly, we use a customized multicluster configuration ``Letter-W" in Fig. \ref{fig: benchmark configs}(d) to demonstrate the partial-collective behaviors of the multicluster control, where the 4 segments of the configurations are identical up to affine transformations.

Our performance metrics include (a) the average runtime of the algorithms, which indicates their complexity and scalability; (b) the average degree $2M/N$ of the resulting graph, which reflects the sparsity and the general communication overhead; (c) eigenvalue $\lambda_{D+2}(\bm{\Omega})$, which indicates the convergence speed of the formation; and (d) the spectral efficiency 
\begin{equation}\label{equ: def spec eff}
    \eta = \frac{\lambda_{D+2}(\bm{\Omega}) N^2}{\lambda_{N}(\bm{\Omega})M},
\end{equation}
which assesses the convergence speed, given the same communication capacity. For the multicluster setup, we evaluate the equivalent stress matrix $\bm{\Omega} = \sum_{c=1}^{C} \tilde{\bm{\Omega}}_c$ in the mean dynamics. The results of the runtime are reported in Table \ref{tab: benchmark results} and statistics of other metrics are shown in Fig. \ref{fig: eigs and spec eff}.

We first evaluate the complexity of the algorithms through the average runtime in Table \ref{tab: benchmark results}, from which we generally observe that the optimization-based algorithms consume a longer time to compute, compared with the analytical Yang et al. and Li et al.  In particular, Lin et al. is constrained by the (manual) prescription of the graph, and Xiao et al. (MISDP) is only capable of small-scale networks due to the NP-hard complexity. The generic version of the proposed algorithm is capable of handling medium-sized networks ($N<100$), above which the multicluster division results in a reasonable runtime. In the case of rotationally symmetric configurations, such as circular and cuboctahedron, the USI method with $\mathcal{P}_3$ effectively reduces the runtime compared to the generic solution $\mathcal{P}_2$. In case of the truncated icosahedron, the USI method gives a roughly $30\times$ speedup. For special geometries such as Fig. \ref{fig: benchmark configs}(d), where a subconfiguration is repeated multiple times up to affine transformations, it is natural to divide them into several clusters, for which the stress matrices also repeat and only need to be computed once.

The sparsity of the graph is shown in the top figure in Fig. \ref{fig: eigs and spec eff}. Existing state-of-the-art methods typically yield sparse solutions, whereas the proposed approach allows explicit control over sparsity through the hyperparameter $\alpha$. Particularly, for small values (i.e., $\alpha = 0.5$, which is around the critical value $\alpha^*$ introduced in Section \ref{sec: hparams}), it produces graphs that are comparably sparse to those obtained by state-of-the-art methods.  For a large $\alpha = 5$, it approaches the upper bound corresponding to a complete graph. Note that in the multicluster implementations of ``Random-100" and ``Random-500", this upper bound cannot be reached even for large $\alpha$, since non-bridging nodes among the clusters do not interact. This feature naturally promotes a sparse solution and can further be adapted to accommodate, e.g., limits of communication ranges.

The convergence speed and spectral efficiency are also evaluated in Fig. \ref{fig: eigs and spec eff}. From the smallest non-zero eigenvalue $\lambda_{D+2}(\bm{\Omega})$, we observe that the analytical algorithms of Yang et al. and Li et al. yield values that are 1 or 2 orders of magnitude smaller for medium- to large-scale networks. This may be considered a trade-off of their sparse nature. The optimization-based algorithms, including the proposed, generally yield larger $\lambda_{D+2}(\bm{\Omega})$, meaning a faster convergence speed since systematic searches for optimality are involved. Note that for the proposed algorithm with $\alpha=0.5$, the convergence is much faster than the analytical solutions while achieving similar levels of sparsity across the various cases. This is also seen from the spectral efficiency plot in the bottom figure of Fig. \ref{fig: eigs and spec eff}, where the proposed algorithms better exploit the communication capacity into accelerating the convergence more efficiently. With $\alpha$ tuned larger, the resulting graph becomes denser while further accelerating the convergence. However, larger $\alpha$ may be less favored in practice as the convergence boost is increasingly marginal, while compromising the graph sparsity.

To summarize the evaluation on stress design, our proposed algorithms have sparse results, fast convergence, and high spectral efficiency while admitting acceptable complexity for medium and large-scale networks. Additionally, they are robust to generic and nongeneric configurations in 2D and 3D, and yield consistently superior results as compared to state of the art across various benchmarks. 

\begin{figure}[t]
    \centering%
    \includegraphics[width=1.0\linewidth]{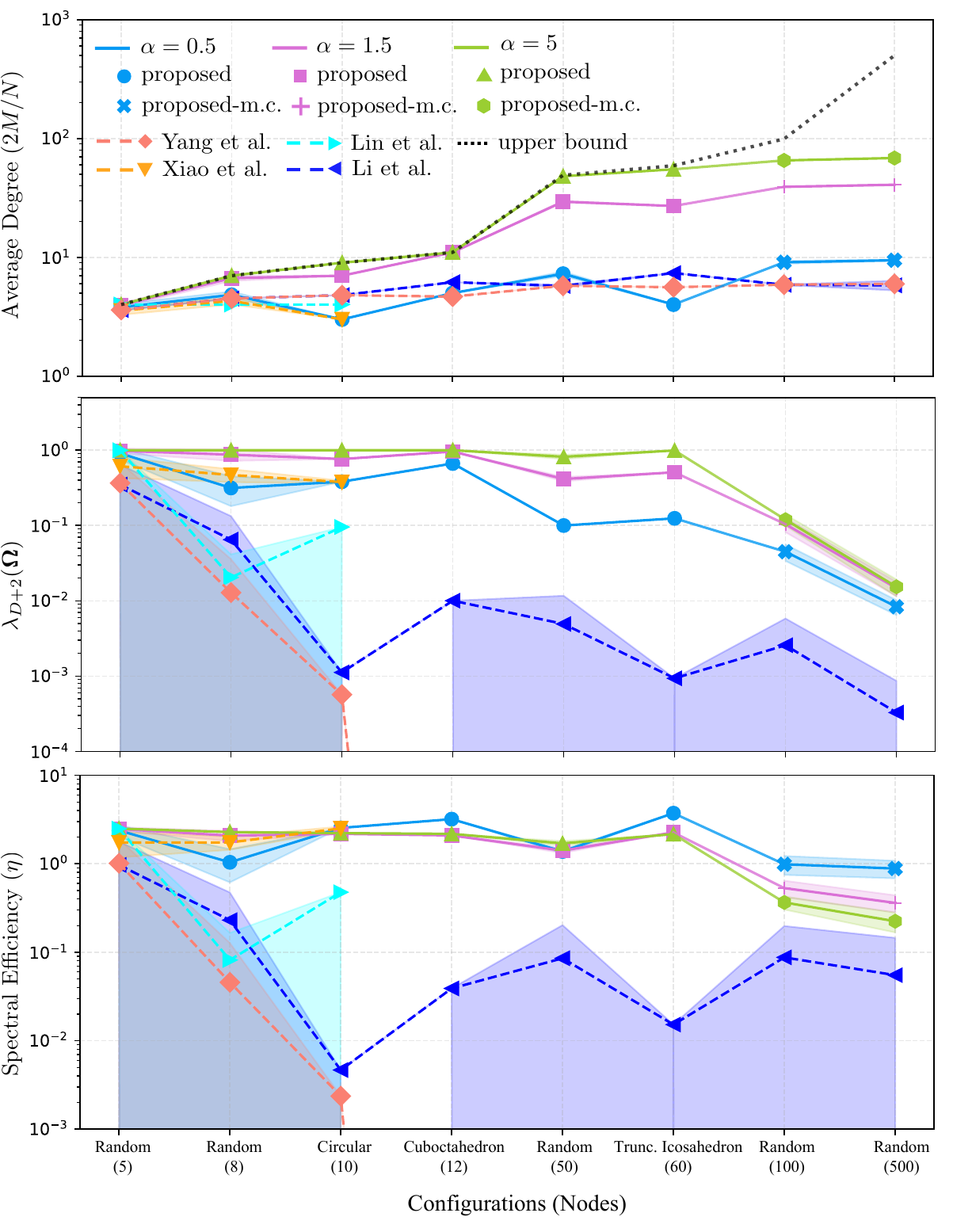}
    \caption{Benchmark performances of the proposed algorithms and the state-of-the-arts. The plotted metrics include the graph sparsity characterized by average degree of the graph $2M/N$ (upper), the convergence-speed-indicating eigenvalue $\lambda_{D+2}(\bm{\Omega})$ (middle), and the spectral efficiency (\ref{equ: def spec eff}) (lower).}%
    \label{fig: eigs and spec eff}%
\end{figure}

\subsection{Multicluster Control with Collective Motions}

We now validate the proposed multicluster control framework using the ``Random-100" test case. Recall from Section \ref{sec: mc control} that a certain number of bridging nodes are required for the clusters to engage in collective motions, which also affects the convergence speed in the mean dynamics. Fig. \ref{fig: mc control results}(a) demonstrates an example of 2 clusters with different numbers of bridging nodes, and (b) shows the convergence speed through the eigenvalue $\lambda{_{D+2}(\bm{\Omega})}$ of the cluster ensemble in the left plot, and the tracking error convergence on the right.

Since the ``Random-100" is a 2D configuration, the minimum number of bridging nodes needed is $D+1=3$ and they must span the 2D space. It can be clearly seen from Fig. \ref{fig: mc control results}(b) that $\lambda_{D+2}(\bm{\Omega})$ is $0$ and the control tracking error does not decrease when there are only $2$ bridging nodes. This shows the lack of collective behaviors due to the loose degrees of freedom caused by insufficient bridging nodes. As we increase the bridging nodes, we observe larger $\lambda_{D+2}(\bm{\Omega})$, which suggests a faster convergence. Correspondingly, the tracking error presents steeper convergence curves with more bridging nodes. 

\begin{figure}[t]
    \centering
    \subfloat[\scriptsize "Random-100" configuration with 2 clusters. The number of bridging nodes is varied among 2 (not sufficient for collective motion), 5, 10, and 20.]{\raisebox{0mm}{ 
        \includegraphics[width=0.49\textwidth]{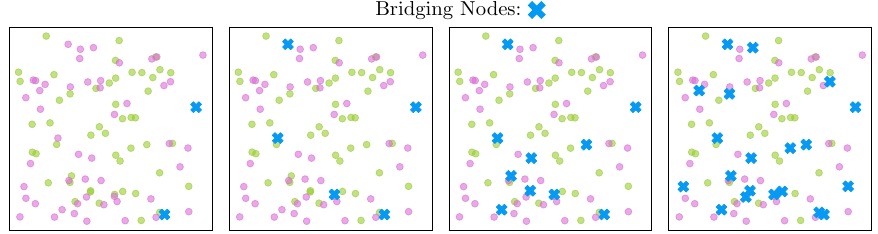}
    }}\\
    \vspace{2pt}
    \hspace{0pt} 
    \subfloat[\scriptsize The convergence speed through the eigenvalues (left, only plotted the first 10) and the control tracking error (right).]{
        \includegraphics[width=0.485\textwidth]{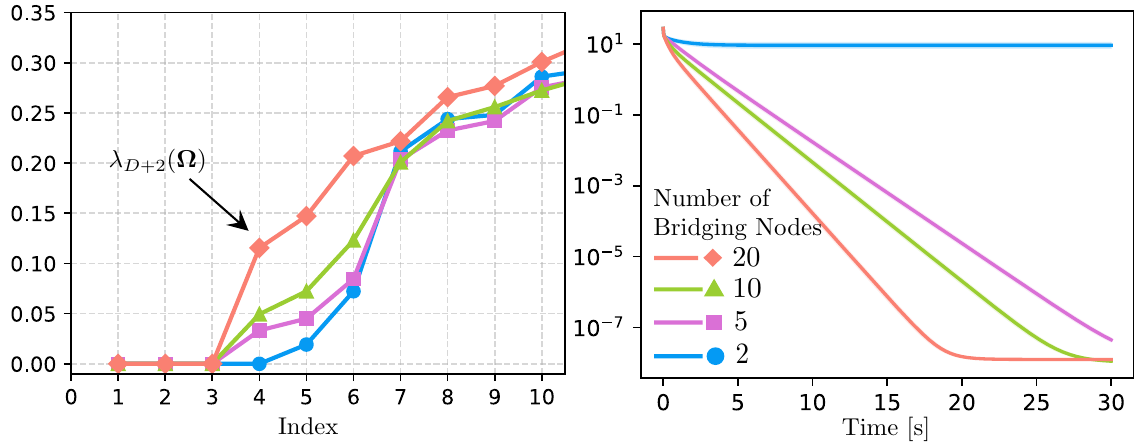}
    }
    
    \caption{The results of the multicluster control with collective affine motions. The number of bridging nodes is varied to validate the claim about the upper bound on the ensemble graph's eigenvalues.}
    \label{fig: mc control results}
\end{figure}

\subsection{Swarm with Inter-Cluster Flexibility}
We finally demonstrate the feature of inter-cluster flexibility of the multicluster control framework using the ``Letter-W" configuration as shown in Fig. \ref{fig: benchmark configs}(d). The configuration is constructed so that each letter segment (considered a cluster) is identical up to an affine transformation, allowing the stress matrix to be computed for only a single segment. To allow intercluster flexibility, the clusters are interfaced with 4 nodes on a plane, and the remaining degrees of freedom are tightened by leaders within each cluster. Then by applying local affine transformations, the clusters may present other geometries than the ``Letter-W" by simply maneuvering the leader positions. Fig. \ref{fig: demo traj} demonstrates the maneuvering patterns, where all nodes are initialized randomly in 3D, and they sequentially converge to the target formation of a ``Bar"" shape, followed by a ``Letter-V" shape, and finally the ``Letter-W" while admitting a collective translation movement. This validates the feature of inter-cluster flexibility with partial collective motions of the proposed multicluster control framework.

\section{Conclusion}\label{sec: conclusions}
In this work, we propose a convex formulation for designing the stress matrix used for affine formation control, aiming to optimize the convergence speed while keeping the network sparse. For the rotationally symmetric configurations, we show that the optimal solution is highly structured, in which the stress values are identical for equivalent edges, and thus propose a reduced formulation for lower computation complexity. Compared with the state-of-the-art solutions through a comprehensive benchmark, we conclude that our proposed formulation is superior in the convergence speed and sparsity while maintaining acceptable computation complexity. Having the optimal design of the network, we also propose a multicluster control framework in which large-scale networks are divided into smaller networks with the overlapping nodes configured to present both collective and noncollective behaviors.  
In future work, to realize large swarm implementations, we aim to consider large-scale collision avoidance, explicit modeling and optimization of the communication cost. As a direct application of the multicluster framework, extensions of the stress-based control used in this work can be used for relevant scenarios. Furthermore, the selection of the bridging nodes can also be optimized for collective motions.

\begin{figure}[t]
    \centering%
    \includegraphics[width=0.95\linewidth]{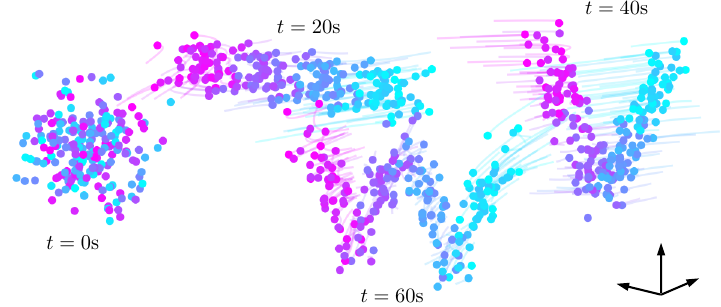}
    \caption{Large-scale formation maneuvers with inter-cluster flexibility. The swarm is initialized at a random configuration in 3D. Multicluster control is deployed to achieve the ``Letter-W" and several other configurations using the same optimized stress matrix.}%
    \label{fig: demo traj}%
\end{figure}

\appendices
\section{Proofs of Properties and Lemmas}\label{sec: properties}
\begin{propertyx}[\textit{Permutation invariance of $\tr(\bm{\Omega})$}]\label{prpt: trace invariance} The trace of the stress matrix is invariant under permutations, i.e., $\tr\qty(\bm{\Omega}) = \tr\qty(\bm{\Pi}_k\bm{\Omega}\bm{\Pi}_k\tp)$  
\end{propertyx}

\noindent\begin{proof}
Using the cyclic property of the trace operator, $\tr\qty(\bm{\Pi}_k\bm{\Omega}\bm{\Pi}_k\tp) = \tr\qty(\bm{\Pi}_k\tp\bm{\Pi}_k\bm{\Omega}) = \tr(\bm{\Omega})$ since permutation matrices are orthogonal, i.e., $\bm{\Pi}_k\tp\bm{\Pi}_k = \bm{I}_N$.
\end{proof}
\\

\noindent \textbf{Proof of Lemma \ref{lmm: prob perm invariance}}.

\noindent\begin{proof}
    We first show the invariance property of the objective function. It is straightforward to see that $\norm{\bar{\bm{\Pi}}_{\mathcal{E},k}\bar{\bm{\omega}}}_1 = \norm{\bar{\bm{\omega}}}_1$ for any permutation $\bar{\bm{\Pi}}_{\mathcal{E},k}\in\qty{0,1}^{\bar{M}\times\bar{M}}$ since permuting the elements of $\bar{\bm{\omega}}$ does not change the L1 norm. The trace of the stress matrix $\tr(\bm{\Omega})$ is also permutation invariant, shown in Property \ref{prpt: trace invariance} in Appendix \ref{sec: properties}. We could thus conclude that the objective (\ref{equ: obj func}) satisfies permutation invariance in Definition \ref{def: obj invariance}.

    Next, we show the invariance property of the feasible set. Observe from (\ref{equ: node edge perm}) that the permutations of the stress matrix $\bm{\Pi}_k\bm{\Omega}{\bm{\Pi}_k}\tp$ is a similarity transforms of $\bm{\Omega}$, which preserves the eigenvalues. As such, Constraints (\ref{equ: cstr max eig}) and (\ref{equ: cstr min eig}) that limit the range of the eigenvalues still hold under permutations. As for the equality constraint (\ref{equ: null cstr mat}), we extend the definition (\ref{equ: rot perm equivalence}) to $\bar{\bm{R}}_k\bar{\bm{P}} = \bar{\bm{P}}\bm{\Pi}_k$ where $\bar{\bm{R}}_k = \diag\qty(\bm{R}_k,1)\in\mathbb{R}^{(D+1)\times (D+1)}$. Observe that
    \begin{subequations}
    \begin{align}               \bar{\bm{B}}\diag\qty(\bar{\bm{\Pi}}_{\mathcal{E},k}\bar{\bm{\omega}})\bar{\bm{B}}\tp\bar{\bm{P}}\tp &\stackrel{\text{(\ref{equ: node edge perm})}}{=} \bm{\Pi}_k\bm{\Omega}{\bm{\Pi}_k}\tp\bar{\bm{P}}\tp\\
    &\stackrel{}{=} \bm{\Pi}_k\bm{\Omega}\bar{\bm{P}}\tp\bar{\bm{R}}_k\tp\\
    &\stackrel{\text{(\ref{equ: null cstr mat})}}{=}\bm{0},
    \end{align}
    \end{subequations}
which proves the permutation invariance of constraint (\ref{equ: null cstr mat}). As such, problem $\mathcal{P}_1$ is permutation-invariant.
\end{proof}
\\



\begin{lemmax}[\textit{Orthogonality of $\tilde{\bm{v}}_c$ and $\mathrm{Null}(\bm{\Omega})$}]\label{lmm: perp vc nullspace}
    Consider $\tilde{\bm{v}}_c=\mqty[\bm{0}\tp & \bm{v}_c\tp & \bm{0}\tp]\tp\in\mathbb{R}^N$,  which contains the eigenvector $\bm{v}_c\in\mathbb{R}^{N_c}$ associated with $\lambda_{D+2}(\bm{\Omega}_c)$, zero-padded to the length of $N$. Then $\tilde{\bm{v}}_c\perp\mathrm{Null}(\bm{\Omega})$ if $\rank(\bm{\Omega}) = N-D-1$.
\end{lemmax}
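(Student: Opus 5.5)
The plan is to identify $\mathrm{Null}(\bm{\Omega})$ explicitly and then check orthogonality blockwise. First, I will show that $\bar{\bm{P}}\tp$ always lies in the nullspace. By Theorem~\ref{thm: stability of mean dynamics}, each zero-padded $\tilde{\bm{\Omega}}_c$ annihilates $\bar{\bm{P}}\tp$. The columns of $\bm{P}\tp$ are annihilated by the stress equilibrium, and $\bm{1}_N$ is annihilated because every stress matrix has $\bm{1}$ in its nullspace. Hence $\bm{\Omega}\bar{\bm{P}}\tp=\bm{0}$.

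Assuming $\bar{\bm{P}}$ has full row rank $D+1$, which the paper assumes throughout, this gives a $(D+1)$-dimensional subspace of the nullspace. The hypothesis $\rank(\bm{\Omega})=N-D-1$ then forces
\begin{equation*}
\mathrm{Null}(\bm{\Omega})=\mathrm{range}(\bar{\bm{P}}\tp).
\end{equation*}
So every null vector has the form $\bm{x}=\bar{\bm{P}}\tp\bm{a}$ for some $\bm{a}\in\mathbb{R}^{D+1}$.

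Next I compute the inner product. Because $\tilde{\bm{v}}_c$ is supported only on $\mathcal{V}_c$,
\begin{equation*}
\tilde{\bm{v}}_c\tp\bar{\bm{P}}\tp\bm{a}=\bm{v}_c\tp\bar{\bm{P}}(\mathcal{V}_c)\tp\bm{a}.
\end{equation*}
It therefore suffices to show $\bar{\bm{P}}(\mathcal{V}_c)\bm{v}_c=\bm{0}$.

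To get this, I use the local structure of $\bm{\Omega}_c$. It is a stabilizable stress matrix designed through $\mathcal{P}_2$, so it is PSD, has rank $N_c-D-1$, and satisfies $\bm{\Omega}_c\bar{\bm{P}}(\mathcal{V}_c)\tp=\bm{0}$ by constraints (\ref{equ: PSD main}) and (\ref{equ: null main}). Its nullspace is therefore exactly $\mathrm{range}(\bar{\bm{P}}(\mathcal{V}_c)\tp)$, provided $\bar{\bm{P}}(\mathcal{V}_c)$ has full row rank, which follows from $N_c\geq D+1$ with the nodes in general position. The vector $\bm{v}_c$ is an eigenvector for the nonzero eigenvalue $\lambda_{D+2}(\bm{\Omega}_c)\geq\gamma>0$. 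By the spectral theorem for the symmetric matrix $\bm{\Omega}_c$, it is orthogonal to the null eigenspace, which yields $\bar{\bm{P}}(\mathcal{V}_c)\bm{v}_c=\bm{0}$. Finally, normalizing $\bm{v}_c$ to unit length gives $\norm{\tilde{\bm{v}}_c}=1$ immediately.

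The main obstacle is the identification step: both the global and local nullspaces must be exactly the affine spans. This requires the rank hypothesis for $\bm{\Omega}$ and the full-row-rank of each $\bar{\bm{P}}(\mathcal{V}_c)$. If the latter is not automatic, I will state it explicitly as an assumption, noting that it is implicit in the local design via constraint (\ref{equ: cstr min eig}).
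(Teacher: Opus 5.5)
Your proposal is correct and follows the paper's own proof. Both use the rank hypothesis to identify $\mathrm{Null}(\bm{\Omega})$ with the span of the columns of $\bar{\bm{P}}\tp$, and then show $\tilde{\bm{v}}_c\tp\bar{\bm{P}}\tp=\bm{v}_c\tp\bar{\bm{P}}(\mathcal{V}_c)\tp=\bm{0}$, since an eigenvector of $\bm{\Omega}_c$ for a nonzero eigenvalue is orthogonal to its kernel; your version simply makes explicit steps the paper only asserts. The obstacle you flag is lighter than you think: the local step needs only $\bm{\Omega}_c\bar{\bm{P}}(\mathcal{V}_c)\tp=\bm{0}$ and $\lambda_{D+2}(\bm{\Omega}_c)\geq\gamma>0$, not that the local kernel equals the affine span exactly.
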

\noindent\begin{proof}
    If $\rank(\bm{\Omega}) = N-D-1$, then augmented configuration matrix $\bar{\bm{P}}(\mathcal{V})$ fully span $\mathrm{Null}(\bm{\Omega})$. Then we could verify that $\tilde{\bm{v}}_c\tp\bm{1}_N = \bm{v}_c\tp\bm{1}_{N_c} = 0$ since vector $\bm{1}_{N_c}$ is in the nullspace of $\bm{\Omega}_c$ and is thus orthogonal to $\bm{v}_c$. Similarly, we could verify that $\tilde{\bm{v}}_c\tp\bm{P}\tp = \bm{v}_c\tp\bm{P}_c\tp=\bm{0}_D\tp$. Therefore, we could establish that $\tilde{\bm{v}}_c\perp\mathrm{Null}(\bm{\Omega})$ and $\tilde{\bm{v}}_c\tp\tilde{\bm{v}}_c=1$ since $\tilde{\bm{v}}_c$ is zero-padded.
\end{proof}

\bibliographystyle{IEEEtran}
\bibliography{refs}

\end{document}